\title{Quantum Algorithm for Simulating the Wave Equation}
\author[1]{Pedro C.S. Costa}
\author[2,3]{Stephen Jordan}
\author[3,4]{Aaron Ostrander}
\affil[1]{\small{Department of Physics and Astronomy, Macquarie University, Sydney, New South Wales 2109, Australia}}
\affil[2]{\small{Microsoft Quantum Architectures and Computation group, Redmond, WA}}
\affil[3]{\small{University of Maryland, College Park, MD}}
\affil[4]{\small{Joint Center for Quantum Information and Computer Science, College Park, MD}}
\date{}
\begin{document}

\bibliographystyle{unsrt}

\newcommand{\id}{\mathds{1}}                      
\newcommand{\ket}[1]{\left| #1\right\rangle}      
\newcommand{\bra}[1]{\left\langle #1\right|}      
\newcommand{\eq}[1]{(\ref{#1})}                   
\newcommand{\sect}[1]{\S\ref{#1}}                 
\renewcommand{\th}{^{\textrm{th}}}                    
\newtheorem{theorem}{Theorem}            

\maketitle

\begin{abstract}
We present a quantum algorithm for simulating the wave equation under Dirichlet and Neumann boundary conditions. The algorithm uses Hamiltonian simulation and quantum linear system algorithms as subroutines. It relies on factorizations of discretized Laplacian operators to allow for improved scaling in truncation errors and improved scaling for state preparation relative to general purpose linear differential equation algorithms. We also consider using Hamiltonian simulation for Klein-Gordon equations and Maxwell's equations.
\end{abstract}

\section{Introduction}\label{sec:intro}
Here we present a quantum algorithm for simulating the wave equation, subject to nontrivial boundary conditions. In particular, the algorithm can simulate the scattering of a wave packet off of scatterers of arbitrary shape, with either Dirichlet or Neumann boundary conditions. The output of the simulation is in the form of a quantum state proportional to the solution to the wave equation. By measuring this state one obtains a sample from a distribution proportional to the square of the amplitude, which in this case can be interpreted as the intensity of the wave.

Compared to classical algorithms, our method uses a number of qubits that scales only logarithmically with the number of lattice sites, whereas classical methods require a number of bits scaling linearly with the number of lattice sites. Additionally, for simulating the wave equation in a region of diameter $\ell$ in $D$-dimensions, discretized onto a lattice of spacing $a$, our quantum algorithm has a state-preparation step with time complexity $\widetilde{O}(D^{5/2} \ell/a)$ and a Hamiltonian simulation step with time complexity $ \widetilde{O} (TD^{2}/a)$, where $T$ is the evolution time for the wave equation. In contrast, all classical algorithms outputting a full description of the field, whether based on finite difference methods or finite element methods, must have time complexity scaling at least linearly with the number of lattice sites, \emph{i.e.} as $\Omega ((\ell/a)^D)$.

Several prior works give quantum algorithms for related problems. Berry gave an algorithm for first order linear differential equations that encodes a linear multistep method into a linear system which is then solved using a quantum linear system algorithm \cite{berry2014high}. This algorithm was recently improved upon in \cite{berry2017quantum} which gives an algorithm that scales better than the algorithm of \cite{berry2014high} with respect to several system parameters. Through standard transformations, the wave equation in a region of diameter $\ell$ can be discretized onto a lattice of spacing $a$ and transformed into a system of linear first order differential equations, which could then in general be solved by the quantum algorithms of \cite{berry2014high, berry2017quantum} with complexity of order $(\ell/a)^{2}$. (See \sect{sec:comparison}.) The complexity the quantum algorithm that we present here scales linearly with $(\ell/a)$. This quadratic improvement is achieved in exchange for being specialized for solving wave equations rather than general linear differential equations. At even greater generality, Leyton and Osborne proposed an algorithm for a class of nonlinear initial value problems \cite{leyton2008quantum}. This greater generality comes at a further cost in performance in that the complexity of the quantum algorithm scales exponentially with the evolution time. Related work on quantum algorithms for solving the Poisson equation can be found in \cite{cao2013quantum}.

The improved scaling of our algorithm relies on higher order approximations of the Laplacian operator and their factorizations using hypergraph incidence matrices. We describe how to find these operators and their hypergraph incidence matrices, and we provide numerical values for up to tenth order. (Throughout this manuscript we use the term $k\th$ order Laplacian to mean a discretization of the Laplacian which, when used on a lattice of spacing $a$, has leading error term of order $a^k$.) To our knowledge, these hypergraph incidence matrix factorizations do not appear elsewhere in the literature. These higher order Laplacians also allow us to improve how errors scale with respect to lattice spacing at the cost of simulating more complex (less sparse) Hamiltonians. { In particular, using a $s$-sparse Hamiltonian to simulate the wave equation for a volume of diameter $\ell$ in $D$ dimensions produces error on the order of $Ta^{2(s/D)-2}$, so $a$ scales as $(\epsilon / T)^{D/2(s-D)}$ (where $\epsilon$ is the error in the state output by the algorithm). Expressing the time complexity of our algorithm in terms of $\epsilon$ and $s$ , we find that the state preparation has time complexity 
	$\widetilde{O}(sD^{3/2} \ell (T / \epsilon)^{D/2(s-D)} )$
	and the Hamiltonian simulation has time complexity
	$\widetilde{O}(sDT (T / \epsilon)^{D/2(s-D)}  )$. Generally $s$ is an integer multiple of $D$, so these complexities scale polynomially in $D$ even though $D$ appears in an exponent.}

In \cite{clader2013preconditioned}, Jacobs, Clader, and Sprouse proposed a quantum algorithm for calculating electromagnetic scattering cross-sections that is based on solving boundary value problems in the special case of monochromatic waves. This monochromaticity assumption allows separation of variables thereby reducing the calculation to a time-independent problem. 

Rather than finite-difference methods, as discussed here, it is also possible to obtain approximate solutions to the full time-dependent wave equation through finite element methods such as the Galerkin method. In \cite{montanaro2016quantum} Montanaro and Pallister analyze, in a general context, the degree to which quantum linear algebra methods such as \cite{harrow2009quantum,childs2015quantum} allow speedup for finite element methods. Detailed analysis of how this can be applied to the wave equation specifically, particularly with the aid of preconditioners, is a complex subject which we defer to future work.

Following \cite{clader2013preconditioned} we consider as our primary application the simulation of scattering in complicated geometries \footnote{Note that, the presence of a scatterer breaks translational invariance and consequently the Laplacian cannot simply be diagonalized by a Fourier transform.}, as illustrated in figure \ref{scattering}. In this case, the initial condition at time zero is a localized wave packet and its time derivative, and the final output of the simulation algorithm is an estimate of the intensity of the wave at a later time $t$ within some region of space occuppied by the detector. After discretizing space, the scatterer can be modeled as a hole in the lattice where some points have been removed. Dirichlet or Neumann boundary conditions can be imposed on the boundary of this hole, as discussed in \sect{sec:bc}. In \sect{sec:init} we describe how to accommodate various initial conditions in our approach. In \sect{sec:numerics} we provide numerical evidence that our approach accurately simulates the wave equation with appropriate behavior at boundaries. In \sect{sec:highorder} and \sect{sec:higherbc} we describe higher order approximations of the Laplacian operator which allow for more precise approximations. In \sect{sec:errors} we provide numerical confirmation that higher order Laplacians improve how errors scale. In \sect{sec:postprocessing} we discuss the post-processing step which follows Hamiltonian simulation. In \sect{sec:comparison} we compare our approach to other quantum algorithms for the wave equation. In \sect{sec:KG} and \sect{sec:Maxwell} we address the use of Hamiltonian simulation for simulating the Klein-Gordon equation and Maxwell’s equations, respectively.

\begin{figure}
	\begin{center}
		\includegraphics[width=0.35\textwidth]{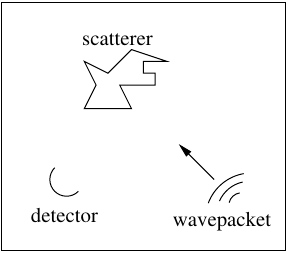}
	\end{center}
	\caption{\label{scattering} For a given initial wave packet and a given scatterer, we would like to estimate the resulting spatial distribution of wave intensity resulting at some later time $t$. In particular, one may wish to know the total intensity captured by a detector occuppying some region of space. This can be estimated using a quantum simulation in which the wavefunction directly mimics the dynamics of the solution to the wave equation. The final intensity in the detector region is equal to the probability associated with the corresponding part of the Hilbert space, which can be estimated from the statistics resulting from a projective measurement.}
\end{figure}

\section{Algorithm}
In any number of dimensions, the wave equation is
\begin{equation}
\label{waveq}
\frac{d^2}{dt^2} \phi = c^2 \nabla^2 \phi.
\end{equation}
To avoid cumbersome notation, in the rest of this paper we will take the wave propagation speed to be $c=1$. For a given initial condition specifying $\phi(\vec{x},t)$ and $\frac{d \phi (\vec{x},t)}{dt}$ at $t=0$, our goal is to obtain a quantum state encoding the solution $\phi(\vec x, T)$ determined by \eq{waveq} at some later time $t$. 

To achieve this, we will first discretize space. We can then think of $\nabla^2$ as a matrix acting on a vector $\phi$ whose entries encode the value of the field at each point in discrete space (with appropriate boundary conditions). Discrete approximations of the Laplacian operator have been thoroughly studied in both spectral graph theory and quantum chemistry, and we draw upon this previous work. In the simplest case, we can discretize a finite region of $\mathbb{R}^n$ onto a cubic grid of lattice spacing $a$. The resulting points can be thought of as a graph $G_a$, with edges between nearest neighbors. The corresponding graph Laplacian $L(G_a)$ is the square matrix whose rows and columns index the vertices of this graph, and whose off-diagonal matrix elements are minus one for connected vertices and zero otherwise. Each diagonal matrix element is equal to the degree of the corresponding vertex, \emph{i.e.} the number of other vertices it is connected to. The operator $-\frac{1}{a^2} L(G_a)$ approximates $\nabla^2$ in the limit $a \to 0$. For example, in one dimension:
\begin{equation}
- \frac{1}{a^2} \left[ L(G_a) \phi \right]_j = \frac{\phi_{j-1} - 2 \phi_j + \phi_{j+1}}{a^2},
\end{equation}
which becomes the second derivative of $\phi$ in the limit $a \to 0$. At finite $a$ the truncation error is $O(a^2)$.

After discretization, we are faced with the task of simulating
\begin{equation}
\frac{d^2}{dt^2} \phi = - \frac{1}{a^2} L \phi.
\end{equation}
To this end, consider a Hamiltonian of the following block form, which by construction is Hermitian independent of the specific choice of matrix $B$.
\begin{equation}
\label{Block_hamiltonian}
H=\frac{1}{a}\begin{bmatrix}0 & B\\
B^\dag & 0
\end{bmatrix}.
\end{equation}
Schr\"{o}dinger's equation then takes the form
\begin{align}
\label{Wave_via_Schr}
\frac{d}{dt} \begin{bmatrix} \phi_V \\ \phi_E \end{bmatrix} & =  \frac{-i}{a}\begin{bmatrix}0 & B\\
B^\dag & 0
\end{bmatrix} \begin{bmatrix} \phi_V \\ \phi_E \end{bmatrix}
\end{align}
which implies
\begin{align}
\frac{d^2}{dt^2} \begin{bmatrix} \phi_V \\ \phi_E \end{bmatrix} & =  \frac{-1}{a^2}\begin{bmatrix}0 & B\\
B^\dag & 0
\end{bmatrix}^2 \begin{bmatrix} \phi_V \\ \phi_E \end{bmatrix} \\
& =  \frac{-1}{a^2}\begin{bmatrix} B B^\dag & 0\\
0 & B^\dag B
\end{bmatrix} \begin{bmatrix} \phi_V \\ \phi_E \end{bmatrix} \label{decouple}
\end{align}
So, if $B B^\dag = L$ then a subspace of the full Hilbert space evolves according to a discretized wave equation.

For any graph, weighted or unweighted, and with or without self-loops, $B B^T = L$ is achieved by taking $B$ to be the corresponding signed incidence matrix, defined as follows. For a given graph with $|V|$ vertices and $|E|$ edges, $B$ is an $|V| \times |E|$ matrix with rows indexed by vertices and columns indexed by edges. One starts by arbitrarily assigning orientations to the edges of the graph. This arbitrary choice affects $B$ but does not affect $B B^T$, which always equals the Laplacian of the undirected graph. The general definition of the incidence matrix for a graph where edge $j$ has weight $W_j$ is
\begin{equation}
\label{incidence_definition}
B_{ij}=\begin{cases}
\sqrt{W_{j}} & \text{if \ensuremath{j} is a self-loop of \ensuremath{i}}\\
\sqrt{W_{j}} & \text{if \ensuremath{j} is an edge with \ensuremath{i,} as source,}\\
-\sqrt{W_{j}} & \text{\text{if \ensuremath{j} is an edge with \ensuremath{i} as sink,}}\\
0 & \text{otherwise.}
\end{cases}
\end{equation}
In the special case that the graph is unweighted, $W_j=1$ for every edge.

From the above, one sees that the Hilbert space associated with the graph is
\begin{equation}
\mathcal{H}=\mathcal{H}_{V} \oplus \mathcal{H}_{E},
\end{equation}
where $\mathcal{H}_V$ is the vertex space (where $\phi_V$ is supported) and $\mathcal{H}_{E}$ is the edge space (where $\phi_E$ is supported). The dynamics on the vertex space obeys the discretized wave equation. The amplitudes associated with the edges are extra variables that necessarily arise when converting second order differential equations into first order differential equations.

Simulating the time evolution according to \eq{Wave_via_Schr} can be achieved using state of the art quantum algorithms for simulating the dynamics induced by general sparse Hamiltonians. One sees that the dimension of the Hilbert space $\mathcal{H}$ is equal to the number of vertices of the graph plus the number of edges: $|V|$ + $|E|$. In particular, for a cubic region of side-length $l$ in $D$-dimensions, discretized into a cubic grid of lattice spacing $a$, one has $|V| = (l/a)^D$ and $|E| = D (l/a)^D$. Thus, the number of qubits needed is $\log_2 \left[ (1+D) (l/a)^D \right]$. The largest matrix element of $H$ has magnitude $1/a$, and the number of nonzero matrix elements in each row or column of $H$ is at most $2D$.  

Using the method of \cite{berry2015hamiltonian} we can approximate the unitary time evolution $e^{-iHt}$ to within $\epsilon$ using a quantum circuit of
\begin{equation} g=O\left[\tau\left[n+\log^{5/2}\left(\tau/\epsilon\right)\right]\frac{\log\left(\tau/\epsilon\right)}{\log\log\left(\tau/\epsilon\right)}\right],
\end{equation}
gates, where $\tau=s\left\Vert H\right\Vert _{\text{max}}t$, where $\left\Vert H\right\Vert _{\text{max}}$ is the largest matrix element of $H$ in absolute value, $s=\text{sparsity of \ensuremath{H}}$ and $n=\text{number of qubits}$. For the Hamiltonian of \eq{Block_hamiltonian}, $s=2D$, $\left\Vert H\right\Vert _{\text{max}} = 1/a$, and $n = \log_2 \left[ (1+D) (l/a)^D \right]$, and therefore the total complexity of simulating the time-evolution is
\begin{eqnarray}
g & = & O \left[ \frac{Dt}{a} \left( \log \left( (1+D) (l/a)^D \right) + \log^{5/2} \left( \frac{2Dt}{a \epsilon} \right) \right) \frac{\log \left( \frac{2Dt}{a \epsilon} \right)}{\log \log \left( \frac{2Dt}{a \epsilon} \right)} \right] \nonumber \\
& = & \widetilde{O} \left[ \frac{tD^2}{a} \right],
\end{eqnarray}
where the notation $\widetilde{O}$ indicates that we are suppressing logarithmic factors. The table below compares the asymptotic runtime and memory usage of our algorithm against standard classical numerical methods for solving differential equations.
\noindent \begin{center}
	\begin{tabular}{|l|c|c|}
		\hline
		& Classical & Quantum\tabularnewline
		\hline
		\hline
		Time & $\Omega\left[T(l/a)^h\right]$ & $\widetilde{O} \left[ tD^2/a \right]$ \tabularnewline
		\hline
		Space & $(l/a)^h$ &$D\log(l/a)$\tabularnewline
		\hline
	\end{tabular}
	\par\end{center}

The remaining considerations are the implementation of desired boundary conditions, the preparation of an initial state implementing the desired initial conditions, errors induced by discretizing the wave equation, and the relative probability to obtain samples from the vertex space versus the edge space at the end of the computation. In the following sections we address each of these issues in turn. These considerations motivate various improvements and extensions to the above algorithm, which we introduce along the way, in particular the use of higher order discretizations of $\nabla^2$. 


\section{Boundary Conditions}
\label{sec:bc}

Here we will consider how to implement two commonly used boundary conditions: Dirichlet and Neumann. With Dirichlet boundary conditions $\phi = 0$ at the boundary. With Neumann boundary conditions $\nabla \phi \cdot \hat{n} = 0$ at the boundary, where $\hat{n}$ is the unit vector normal to the boundary. For any shape of boundary and in any number of dimensions our prescription is as follows. To implement Neumann boundary conditions use the ordinary graph Laplacian of the graph obtained by starting with the cubic grid and removing the vertices interior to the scattering object. To implement Dirichlet boundary conditions one must add weighted self-loops to each of the vertices on the boundary with weights equal to the number of edges that are missing relative to interior vertices. (This ensures that the diagonal matrix elements of the resulting graph Laplacian are all equal.) See figure \ref{bc} for an illustration. For pedagogical reasons, we give two derivations of the Laplacians implementing these boundary conditions, using the one dimensional path graph as an example. One derivation is based on discretization of derivatives, and the other is by linear algebra on an already-discretized system.
\begin{figure}
	\includegraphics[trim=-180 50 50 50,clip,scale=0.6]{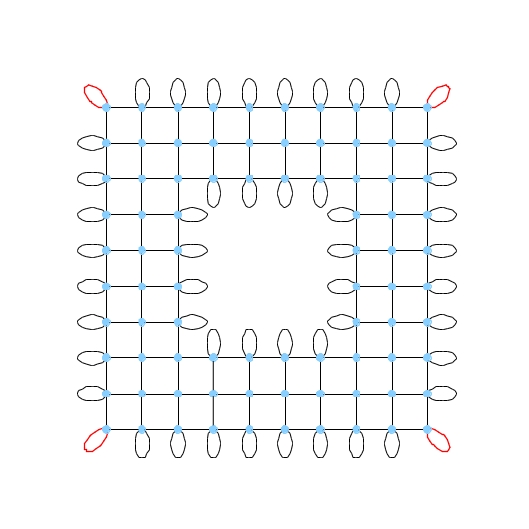}\\
	\caption{\label{bc}To implement Dirichlet boundary conditions in a discretize square region with a square hole, one adds self-loops as illustrated above. The thick red self loops at the corners have weight two. All other edges (self-loops and otherwise) have weight one. This prescription was used in the numerical examples of \sect{sec:numerics}. To implement Neumann boundary conditions one omits all self-loops.}
\end{figure}

\subsection*{Neumann Boundary Conditions by Discretization}

Consider the line segment $[0,1]$. Within this, the second derivative discretizes to
\begin{equation}
\frac{d^2 \phi}{d x^2} = \lim_{a \to 0} \frac{\frac{d \phi}{dx} (x+a/2) - \frac{d \phi}{dx}(x-a/2)}{a}  
= \lim_{a \to 0} \frac{\frac{\phi(x+a)-\phi(x)}{a} - \frac{\phi(x)-\phi(x-a)}{a}}{a}.  
\end{equation}
This yields at internal vertices the familiar form of a discrete Laplacian.
\begin{equation}
\frac{d^2 \phi}{d x^2}(x) = \lim_{a \to 0} \frac{\phi(x+a) - 2 \phi(x) + \phi(x-a)}{a^2}
\end{equation}
With Neumann boundary conditions, $\frac{d \phi}{dx} = 0$ at the boundaries. Thus, at the leftmost vertex we have:
\begin{eqnarray}
\frac{d^2 \phi}{d x^2}(0) &=& \lim_{a \to 0} \frac{\frac{d \phi}{dx} (a/2) - \frac{d \phi}{dx}(-a/2)}{a}\\ 
&=& \lim_{a \to 0} \frac{\frac{d \phi}{dx} (a/2)}{a} \nonumber \\ 
&=& \lim_{a \to 0} \frac{\phi(a)-\phi(0)}{a^2}. \nonumber
\end{eqnarray}

Similarly, $\frac{d \phi}{dx}(x+a/2)$ vanishes at the rightmost vertex. For example, if we discretize the segment $[0,1]$ into five lattice sites we would have
\begin{equation}
-\frac{1}{a^2} L_{\mathrm{Neumann}} \phi = \frac{1}{a^2} \left[ \begin{array}{rrrrr}
-1 & 1 & 0 & 0 & 0 \\
1 & -2 & 1 & 0 & 0 \\
0 & 1 & -2 & 1 & 0 \\
0 & 0 & 1 & -2 & 1 \\
0 & 0 & 0 & 1 & -1
\end{array} \right]
\left[ \begin{array}{c}
\phi(0) \\
\phi(a)\\
\phi(2a) \\
\phi(3a) \\
\phi(4a)
\end{array} \right].
\end{equation}
$L_{\mathrm{Neumann}}$ is recognizable as the ordinary graph Laplacian for the path graph of five vertices:
\begin{center}
	\includegraphics[width=2in]{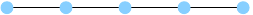}.
\end{center}
This holds more generally; the ordinary graph Laplacian on discretized regions of any shape in any number of dimensions yields Neumann boundary conditions. Note that in the above example discretizing the unit interval with five vertices, one should take $a=1/5$ because each of the four edges in the graph corresponds to a distance of $a$, but as we see from the above argument, the boundary conditions correspond to $d \phi/dx = 0$ at $x=-a/2$ and $x = 1+a/2$.

\subsection*{Neumann Boundary Conditions, Algebraic Derivation}
We first consider the Laplacian $L$ for an infinite path graph with vertices labeled by $\mathbb{Z}$, which is a tridiagonal matrix with 2 on the diagonal and -1 on the off-diagonals.
It suffices to consider imposing the boundary conditions at the left end of the interval, which we assume corresponds to the vertex 0 in our discrete space. Then for Neumann boundary conditions the field $\phi$ is constant on all vertices $v \in \mathbb{Z}^-$, that is $\phi_v= \phi_0$. Then consider how $L$ acts on the field in the neighborhood of 0. We represent this as

\begin{align}
L \vec \phi &= \begin{bmatrix}
2 & -1 & 0 & 0 & 0\\
-1 & 2 & -1 & 0 & 0\\
0 & -1 & 2 & -1 & 0\\
0 & 0 & -1 & 2 & -1 \\
0 & 0 & 0 & -1 & 2 
\end{bmatrix}
\begin{bmatrix}
\phi_{-2} \\
\phi_{-1} \\
\phi_{0} \\
\phi_{1} \\
\phi_{2}
\end{bmatrix}\\
\mapsto L_{Neumann} \vec \phi &=  \begin{bmatrix}
2 & -1 & 0 & 0 & 0\\
-1 & 2 & -1 & 0 & 0\\
0 & -1 & 2 & -1 & 0\\
0 & 0 & -1 & 2 & -1 \\
0 & 0 & 0 & -1 & 2 
\end{bmatrix}
\begin{bmatrix}
\phi_{0} \\
\phi_{0} \\
\phi_{0} \\
\phi_{1} \\
\phi_{2}
\end{bmatrix}
=  \begin{bmatrix}
0 \\
0 \\
\phi_0 - \phi_1 \\
2 \phi_1 - \phi_0 - \phi_2 \\
\dots
\end{bmatrix}\\
& =  \begin{bmatrix}
0 & 0 & 0 & 0 & 0\\
0 & 0 & 0 & 0 & 0\\
0 & 0 & 1 & -1 & 0\\
0 & 0 & -1 & 2 & -1 \\
0 & 0 & 0 & -1 & 2 
\end{bmatrix}
\begin{bmatrix}
\phi_{0} \\
\phi_{0} \\
\phi_{0} \\
\phi_{1} \\
\phi_{2}
\end{bmatrix}
\end{align}

So we see that imposing Neumann boundary conditions allows us to ignore the vertices labeled by negative numbers. To give a finite example, if we restrict to the vertices $0,1,2,3$ (i.e. impose Neumann boundary conditions for vertices to the left of 0 and to the right of 3) then the Laplacian we produce is 

\begin{equation} \label{Laplacian_N}
L=
\begin{bmatrix}
1 & -1 & 0 & 0\\
-1 & 2 & -1 & 0 \\
0 & -1 & 2 & -1 \\
0 & 0 & -1 & 1
\end{bmatrix},
\end{equation}

which is exactly the graph Laplacian for the path graph on 4 vertices.

\subsection*{Dirichlet Boundary Conditions, Algebraic Derivation}

We use similar arguments to show how to impose Dirichlet boundary conditions. Consider imposing $\phi =0$ to the left of $0$. Then $L$ acts as

\begin{align}
L \vec \phi &= \begin{bmatrix}
2 & -1 & 0 & 0 & 0\\
-1 & 2 & -1 & 0 & 0\\
0 & -1 & 2 & -1 & 0\\
0 & 0 & -1 & 2 & -1 \\
0 & 0 & 0 & -1 & 2 
\end{bmatrix}
\begin{bmatrix}
\phi_{-2} \\
\phi_{-1} \\
\phi_{0} \\
\phi_{1} \\
\phi_{2}
\end{bmatrix}\\
\mapsto L_{Dirichlet} \vec \phi &=  \begin{bmatrix}
2 & -1 & 0 & 0 & 0\\
-1 & 2 & -1 & 0 & 0\\
0 & -1 & 2 & -1 & 0\\
0 & 0 & -1 & 2 & -1 \\
0 & 0 & 0 & -1 & 2 
\end{bmatrix}
\begin{bmatrix}
0 \\
0 \\
\phi_{0} \\
\phi_{1} \\
\phi_{2}
\end{bmatrix}
=  \begin{bmatrix}
0 \\
- \phi_0 \\
2 \phi_0 - \phi_1 \\
2 \phi_1 - \phi_0 - \phi_2 \\
\dots
\end{bmatrix}\\
& =  \begin{bmatrix}
0 & 0 & 0 & 0 & 0\\
0 & 0 & -1 & 0 & 0\\
0 & 0 & 2 & -1 & 0\\
0 & 0 & -1 & 2 & -1 \\
0 & 0 & 0 & -1 & 2 
\end{bmatrix}
\begin{bmatrix}
0 \\
0 \\
\phi_{0} \\
\phi_{1} \\
\phi_{2}
\end{bmatrix} \label{2newLaplacian}
\end{align}

Since we are only concerned with how the Laplacian acts on vertices $0,1,2 \dots$ and not on $-1$ we can ignore the fact that $(L \vec \phi )_{-1} = - \phi_0$. Another way to motivate this is that by restricting the wave equation to act on vertices $0,1,2 \dots$ we do not provide a dynamical equation for $\phi_{-1}$, so it will remain 0.

To compare this with the Neumann case, if we restrict to the vertices $0,1,2,3$ then the Laplacian we produce is 

\begin{equation}
\label{Laplacian_D}L=
\begin{bmatrix}
2 & -1 & 0 & 0\\
-1 & 2 & -1 & 0 \\
0 & -1 & 2 & -1 \\
0 & 0 & -1 & 2
\end{bmatrix},
\end{equation}
which differs from the Neumann Laplacian in the upper-left and lower-right entries.

\section{Initial Conditions}
\label{sec:init}

The first step in our quantum algorithm is to prepare a quantum state $[\phi_V, \phi_E]$ corresponding to desired initial conditions $\phi(x)$ and $\frac{{\partial} \phi}{{\partial} t}(x)$ at $t=0$. Our method for preparing the initial state and its complexity varies depending on the specific type of initial conditions.

As a first example, consider a line-segment with Dirichlet boundary conditions, discretized into four lattice sites. In this case, by \eq{Block_hamiltonian} and \eq{incidence_definition}, we have
\begin{equation}
H = \frac{1}{a} \left[ \begin{array}{rrrrrrrrr}
0 & 0 & 0 & 0 & 1 & 1 & 0 & 0 & 0\\
0 & 0 & 0 & 0 & 0 & -1 & 1 & 0 & 0\\
0 & 0 & 0 & 0 & 0 & 0 & -1 & 1 & 0\\
0 & 0 & 0 & 0 & 0 & 0 & 0 & -1 & 1\\
1 & 0 & 0 & 0 & 0 & 0 & 0 & 0 & 0\\
1 & -1 & 0 & 0 & 0 & 0 & 0 & 0 & 0\\
0 & 1 & -1 & 0 & 0 & 0 & 0 & 0 & 0\\
0 & 0 & 1 & -1 & 0 & 0 & 0 & 0 & 0\\
0 & 0 & 0 & 1 & 0 & 0 & 0 & 0 & 0
\end{array}
\right].
\end{equation}
This can be viewed as a discretization of
\begin{equation}
H = \left[ \begin{array}{cc}
0 & \frac{d}{dx} \\
-\frac{d}{dx} & 0
\end{array} \right] \label{1dcontinuum}
\end{equation}
where we use the forward difference to approximate $\frac{d}{dx}$ and the backward difference to approximate $-\frac{d}{dx}$. More generally, in an arbitrary number of dimensions, the Hamiltonian \eq{Block_hamiltonian} can be seen as a discretization of
\begin{equation}
H = \left[ \begin{array}{cc}
0 & \vec{\nabla}^T \\
-\vec{\nabla} & 0
\end{array} \right].
\end{equation}
(We here view $\phi_E$ as describing a vector field, where the value associated with a given edge in the graph is the vector component along the direction that the edge points.) Consequently, for an arbitrary initial condition specified by $\phi_0(x)$ and $\frac{d}{dt} \phi_0(x)$ one must prepare a corresponding initial quantum state that is a solution to
\begin{eqnarray}
\phi_V & = & \phi_0 \nonumber \\
\vec{\nabla} \cdot \vec{\phi}_E & = & i \frac{d}{dt} \phi_0. \label{Gausslike}
\end{eqnarray}
In more than one dimension, the equation \eq{Gausslike} does not uniquely determine $\phi_E$ since $\vec{\nabla} \times \vec{\phi}_E$ is unspecified. (In one dimension $\phi_E$ is determined up to an additive constant.) In the remainder of this section we consider how to compute a solution to \eq{Gausslike} and how to prepare the initial state $[\phi_V, \phi_E]$ on a quantum computer in various cases of interest.
\subsection{Static Initial State}

The simplest case is to prepare a state with $\frac{d}{dt} \phi$ uniformly equal to zero. Then, one can use $\phi_E = 0$ as an initial quantum state. The state preparation problem then reduces to preparing $\phi_V$; however, this is not necessarily efficient for arbitrary $\phi_V$. Preparation of a completely arbitrary quantum state in an $N$-dimensional Hilbert space has complexity of order $N$, \emph{i.e.} exponential in the number of qubits. Specifically, suppose one were given an oracle, which when queried with a bit string $x$ returned a corresponding amplitude $\psi(x)$ written (to some number of bits of precision) into an output register. One wishes to prepare the corresponding quantum state $\ket{\psi} = \sum_{x \in \{0,1\}^n} \psi(x) \ket{x}$. The worst-case complexity of this task is $\Theta(\sqrt{N})$~\cite{Grover2000}. In many cases of interest, the complexity for preparing the initial state may be much lower. 
In particular, as was originally shown in \cite{Zalka}, a state of the form
\begin{equation}
\sum_{x \in \{0,1\}^n} \sqrt{p(x)} \ket{x}
\end{equation}
can be prepared in poly($n$) time on a quantum computer provided that each of the conditional probabilities
\begin{equation}
p(x_1 x_2 \ldots x_r | x_{r+1} x_{r+2} \ldots x_n) \quad \quad r = 1,2,3, \ldots 
\end{equation}
can be efficiently computed. As discussed in \cite{Grover_Rudolph}, these conditional probabilities can be efficiently computed for all log-concave probability distributions.

\subsection{Rigidly Translating Wave packet}

In one spatial dimension, for any twice-differentiable wave packet shape $w$,
\begin{equation}
\phi(\vec{x},t) = w(x - ct)
\end{equation}
is a solution to the wave equation $\frac{d^2}{dt^2} \phi = c^2 \nabla^2 \phi$. (In this manuscript we will generally take $c=1$.) From \eq{1dcontinuum} one sees that the quantum state
\begin{equation}
\label{rigid1d}
\left[ \begin{array}{c} \phi_V \\ \phi_E \end{array} \right] = \left[ \begin{array}{c} w(x - t) \\ i w(x-t) \end{array} \right]
\end{equation}
represents this solution in the continuum limit. For a lattice with Neumann or Dirichlet boundary conditions, the vertex and edge Hilbert spaces have different dimensions, so the initial state is not merely $(|0 \rangle + i | 1 \rangle) | w(0) \rangle / \sqrt{2}$ where $| w \rangle \propto \sum_{x} w(x) \ket{x}$. This can be overcome by instead preparing $(|0 \rangle | w(0)_V \rangle + i | 1 \rangle | w(0)_E \rangle) / \sqrt{2}$ where $| w(0)_V \rangle \propto \sum_{j \in V} w(ja) \ket{j}$ and $| w(0)_E \rangle \propto \sum_{(j,k) \in E} w((j+k)a/2) \ket{(j,k)}$. So if the quantum state $\sum_{x} w(x) \ket{x}$ (suitably discretized in each Hilbert space) can be prepared in polynomial time then so can the state \eq{rigid1d}. More generally, in an arbitrary number of dimensions, one can obtain an analogous initial state proportional to
\begin{equation}
\left[ \begin{array}{c} \phi_V \\ \vec{\phi}_E \end{array} \right] = 
\left[ \begin{array}{c} w(x) \\ i \vec{v} w(x) \end{array} \right]
\end{equation}
with $|\vec{v}| = c$. This initially represents a wave packet traveling with velocity $\vec{v}$, but unlike in the one dimensional case, the wave packet will evetually suffer dispersion rather than simply translating rigidly.

\subsection{General Case}
\label{sec:general}

In the general case we may imagine that we are given efficient quantum circuits preparing the states $\ket{\phi_0} = \sum_{\vec{x}} \phi(\vec{x},0) \ket{\vec{x}}$ and $\ket{\dot{\phi}_0} \equiv \sum_{\vec{x}} {\left.\frac{\partial\phi\left(x,t\right)}{\partial t}\right|_{t=0}} \ket{\vec{x}}$.
The discrete analogue of \eq{Gausslike} is, via our incidence-matrix discretization:
\begin{eqnarray}
\phi_V & = & \phi_0 \\
-\frac{i}{a} B \phi_E & = & \dot{\phi}_0.
\end{eqnarray}
In two and higher dimensions, the solution to $\frac{i}{a} B \phi_E = \dot{\phi}_0$ is non-unique in general since the number of edges in the graph $G_a$ exceeds the number of vertices. Thus, the number of columns of $B$ exceeds the number of rows by a factor of order $D$, the number of spatial dimensions. One valid solution is to use as our quantum initial state
\begin{equation}
\label{init}
\left[ \begin{array}{c} \phi_V \\ \phi_E \end{array} \right]  \propto \left[ \begin{array}{c} 
\phi_0 \\ i a B^+ \dot{\phi}_0 \end{array} \right]
\end{equation}
where $B^+$ denotes the Moore-Penrose pseudoinverse of the matrix $B$.  A Moore-Penrose pseudoinverse has the property that the image of $B^+$ is the orthogonal complement of the kernel of $B$. Recall that $B$ is a map from $H_E \to H_V$. For the case of the standard 2nd order Laplacian, the corresponding $B$ is the signed incidence matrix of a graph. In this case $B$ can be interpreted in the continuum limit as a divergence. The Helmholtz decomposition theorem says that any twice-differentiable vector field can be decomposed into a curl-free component and a divergence-free component, the latter of which corresponds to the kernel of $B$ in the continuum limit. Thus, $\phi_E = -i a B^+ \dot{\phi}_0$ corresponds in the continuum limit to the solution to the following system of equations.
\begin{eqnarray}
\vec{\nabla} \cdot \vec{\phi}_E & = & -i \dot{\phi}_0 \\
\vec{\nabla} \times \vec{\phi}_E & = & 0.
\end{eqnarray}

To construct the state \eq{init} we can use the quantum linear systems algorithm of \cite{childs2015quantum}. Specifically, we wish to prepare the state proportional to the solution to $A x = b$ where
\begin{eqnarray}
A & = & \left[ \begin{array}{cc} \id & 0 \\ 0 & i a^{-1} B \end{array} \right] \\
b & = & \left[ \begin{array}{c} \phi_0 \\ \dot{\phi}_0 \end{array} \right]
\end{eqnarray}
This can be done using the quantum linear systems algorithm of \cite{childs2015quantum}, whose time complexity is $\widetilde{O}(\kappa)$, where $\kappa$ is the condition number of $A$, which in this case is equal to the condition number of the incidence matrix $B$.

Using the state \eq{init} restricts the classes of solutions which our algorithm simulates. 
This is because $B^+ \dot{\phi_0}$ does not have support in the kernel of $B$.
This is significant for Neumann boundary conditions because the kernel of $B$ (and of the Laplacian) is the all-ones vector, whereas for Dirichlet boundary conditions the kernel is trivial.
This means that, even if $ \dot{\phi_0}$ had support in the space spanned by the all-ones vector, the algorithm will simulate the system with the modified initial condition where $ \dot{\phi_0}$ does not have support in this space. This restriction may seem artificial, but it is a natural consequence of the unitarity of Hamiltonian dynamics. If the uniform support of $ \dot{\phi_0}$ were not projected out, then our algorithm would be able to simulate the solution $\phi (\vec x,t) \propto t$ (with no dependence on $\vec x$) for which $ \dot{\phi}$ is constant. This would result in the norm of the quantum state changing in time, in contradiction to unitarity.

In more detail, an algorithm from \cite{childs2015quantum} can perform the transformation $\dot{\phi}_0 \to -i a B^+ \dot{\phi}_0$ using a number of gates that scales as $\widetilde{O}(s \kappa \log N)$ where $s$ is the sparsity of $B$, $\kappa$ is the condition number of $B$ and $N$ is the dimension of the Hilbert space. The condition number of $B$ is the square root of the condition number of the graph Laplacian $L$. $L$ has norm $O(D)$ and smallest eigenvalue $O(\ell^2/a^2)$, independent of $D$, where the volume under consideration is $\ell \times \ell \times \ldots \times \ell$ which discretized onto a grid of spacing $a$. Thus $\kappa \sim \sqrt{D} \ell/a$. The sparsity of $B$ is $s \sim D$ for any fixed order of discretization, and the Hilbert space dimension is $N \sim (\ell/a)^D$. Putting this together yields an overall complexity of $\widetilde{O}(D^{5/2} \ell a^{-1})$ for state preparation in this case (neglecting log factors).



\section{Numerical Examples}
\label{sec:numerics}

The above analysis can be confirmed by numerical examples, as shown in this section. In all cases one sees that the dynamics and implementation of initial conditions and boundary conditions are consistent with theoretical expectations. Our quantum algorithm is implemented on a gate model quantum computer, and time evolution is discretized into a sequence of elementary gates through the method of \cite{berry2015hamiltonian}. The error induced by this time discretization is rigously upper bounded in \cite{berry2015hamiltonian}. Thus the focus of our numerical study is to investigate the errors induced by spatial discretization and verify the implementation of boundary conditions and initial conditions. To this end, we use the Dormand-Prince method\footnote{This is implemented as ODE45 in MATLAB.} \cite{DP80} (a variant of Runge-Kutta) to solve Schr{\"o}dinger's equation with Hamiltonians arising from our incidence matrix prescription.

As we know from \cite{STAB} there is a relation between the timestep and the lattice spacing that is necessary, but not sufficient, to keep the numerical simulations stable, which is
\[
\Delta t < a,
\] 
because of that we used this relation in all our numerical analyses.
In small examples we verified the accuracy numerical solution to the differential equations by comparing against direct computation of the entire unitary operator $e^{-iHt}$ applied to the initial state vector.

\begin{figure}[ht]
	\begin{tabular}{cc}
		\includegraphics[scale=0.15]{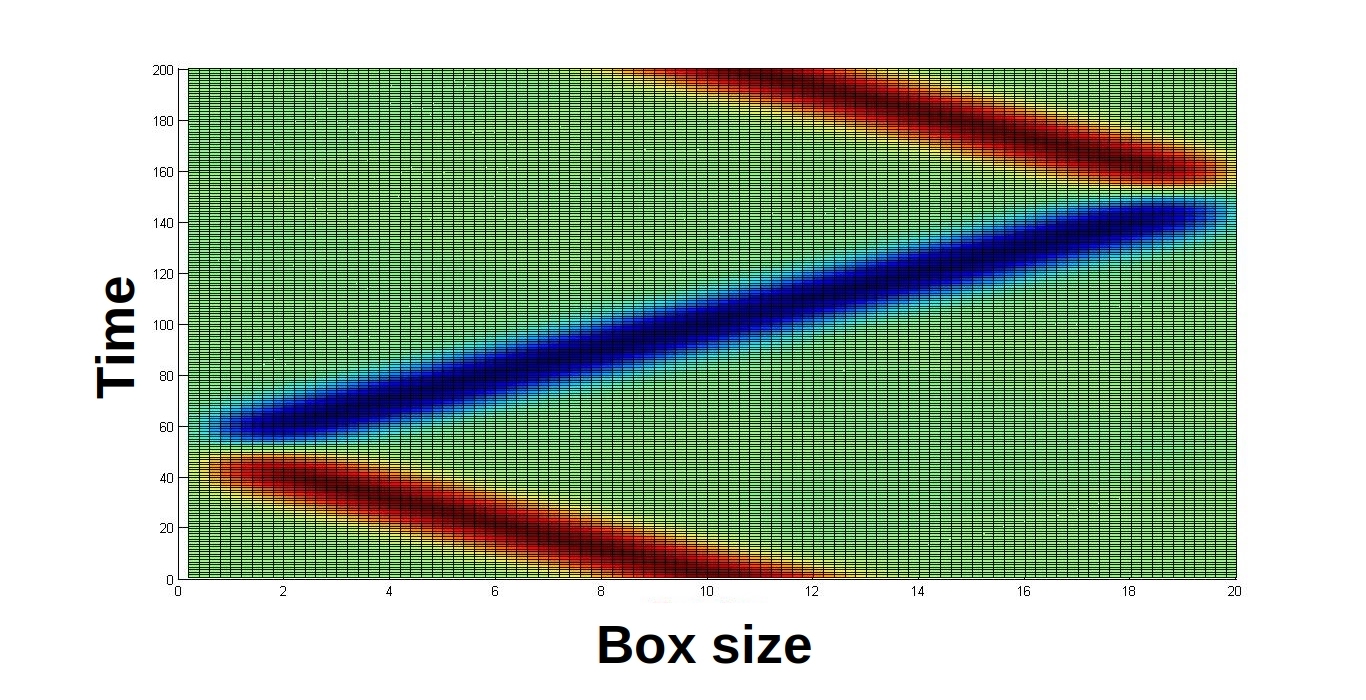}
		&\includegraphics[scale=0.18]{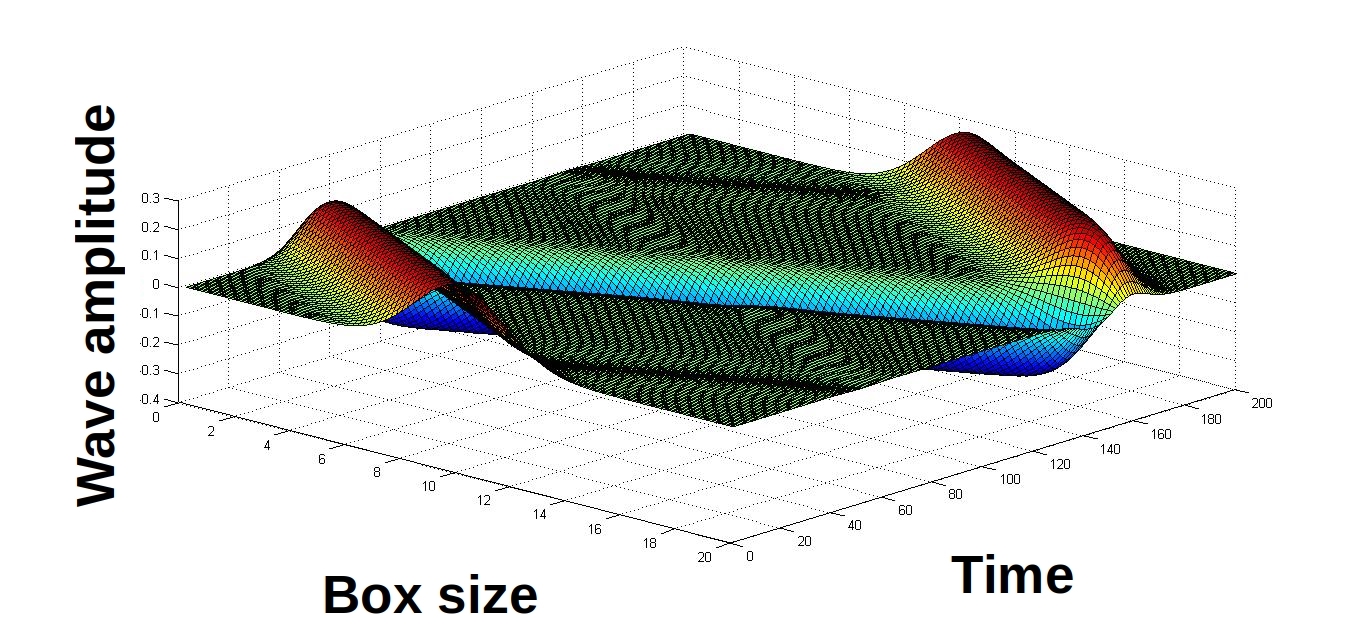}\\
		a)
		&
		b)
	\end{tabular}
	\caption{{\small \label{fig:shape_preser_one} \textbf{Shape preserving on line segment Dirichlet.} Here we consider the case of a rigidly-translating wave packet as described by \eq{rigid1d}. We can see two different views of the same wave packet starting in the middle point in a box with size 20, where space is represented by the $x$-axis while in the $y$-axis we have the time and the units are meters and seconds respectively. We can see the packet going back and forward between the extremes of the box. Although its wave amplitude is preserved in time, when the wave packet arrives at the end points the amplitude reflects simultaneously with the propagation's direction. The red color gives us the positive amplitude against the blue one with negative value. In figure $b$ the amplitude height is plotted in the $z$=axis and its units are meters. In this example we choose lattice spacing $a= 0.2469$ and gaussian wave packet of width $\sigma=1.6$.}}
\end{figure}

\begin{figure}[ht]
	\begin{tabular}{cc}
		\includegraphics[scale=0.15]{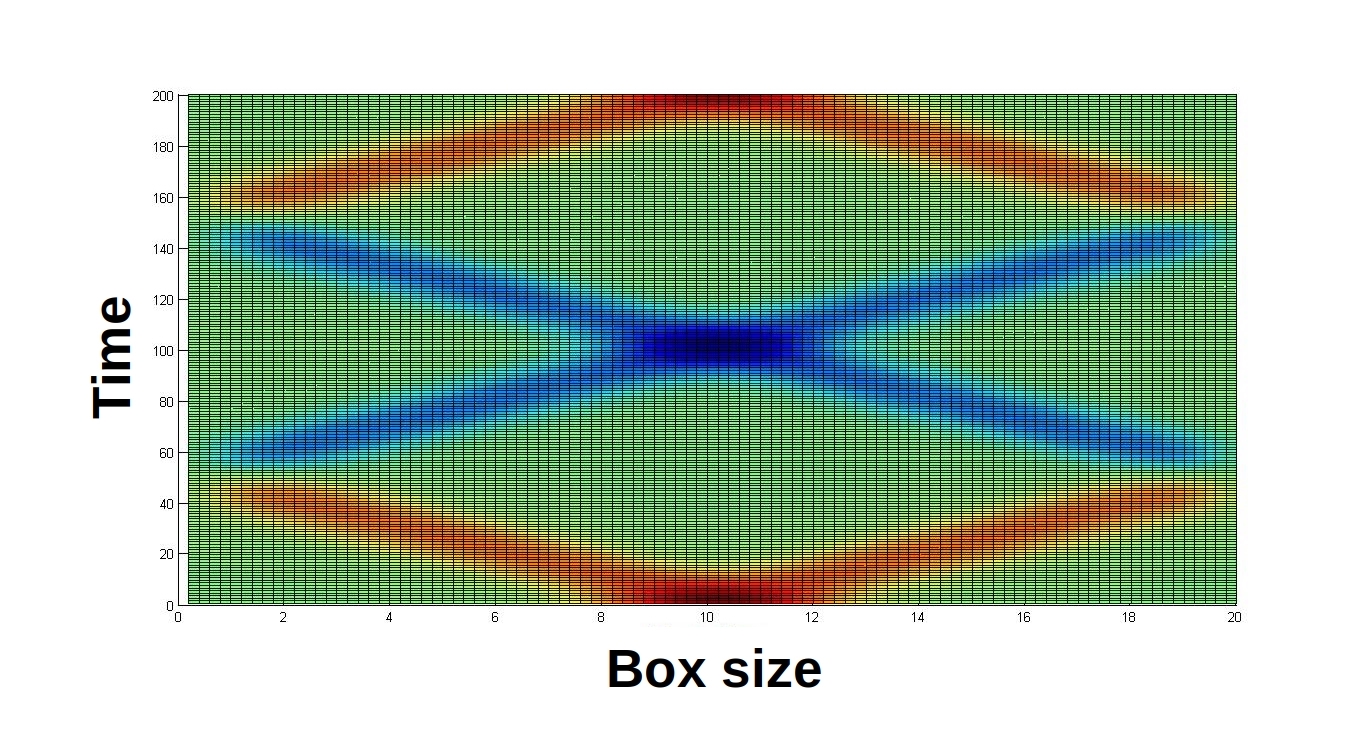}
		&\includegraphics[scale=0.18]{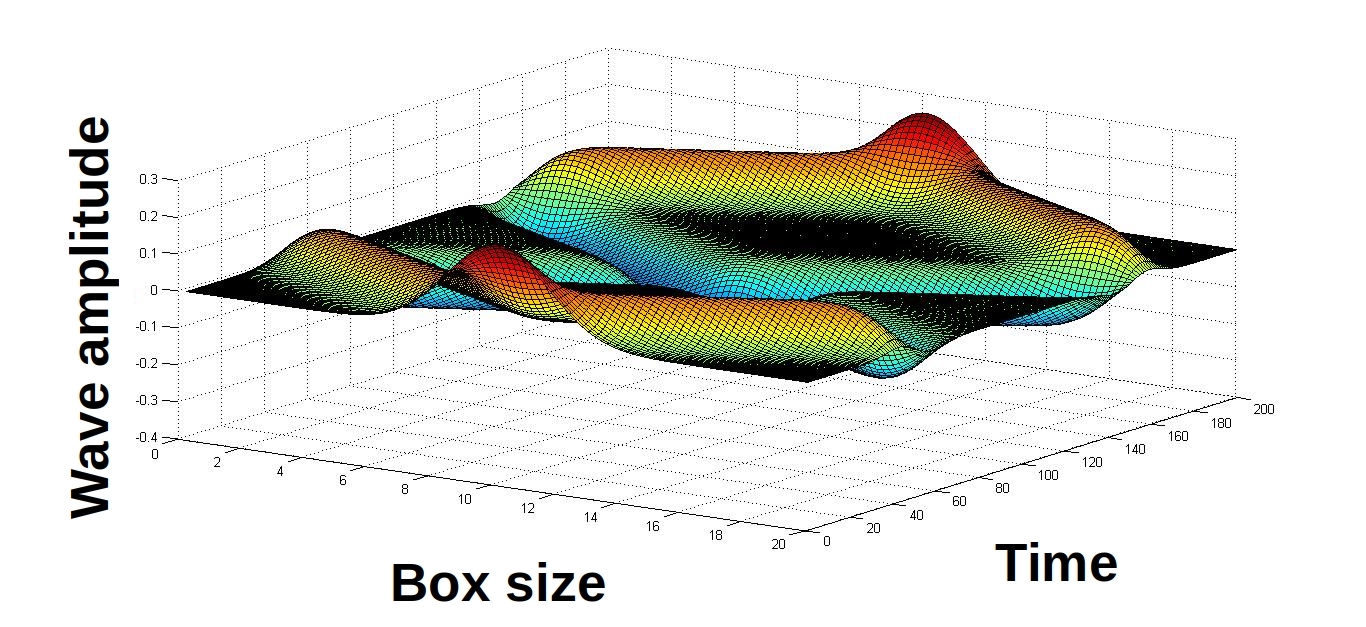}\\
		a)
		&
		b)
	\end{tabular}
	\caption{{\small \label{fig:spreading_wave} \textbf{Spreading wave on line segment Dirichlet.} In these figures we kept with the same parameters used for the previous plots, changing only the initial condition for $\vec{\phi}_{E}$. Now we can see the wave spreading equally for the both sides, reflecting in the boundary and then meeting themselves again in the center, but with the amplitude inverted. The units are the same used in the previous plots, meters and seconds.} }
\end{figure}

\begin{figure}[ht]
	\begin{tabular}{cc}
		\includegraphics[scale=0.15]{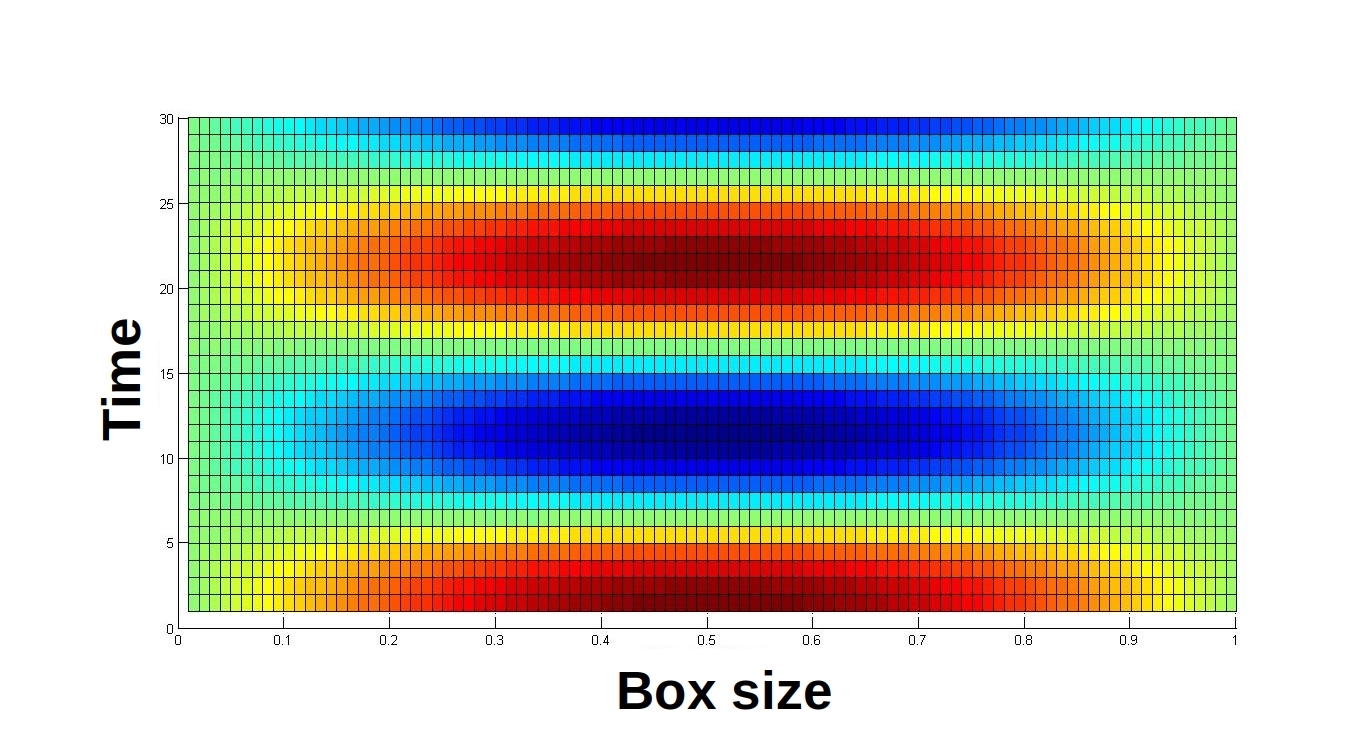}
		&\includegraphics[scale=0.18]{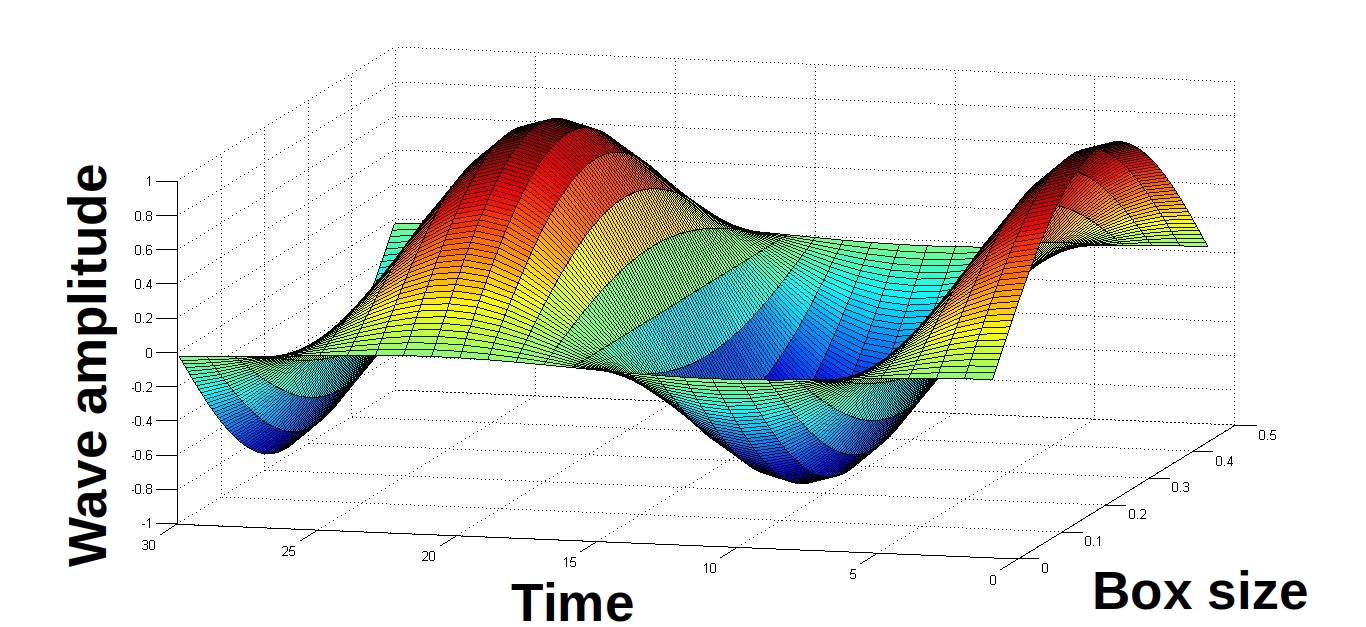}\\
		a)
		&
		b)
	\end{tabular}
	\caption{{\small \label{fig:standing_wave} \textbf{Standing wave.} Here we consider a standing wave, which can be described analytically by $\phi\left(x,t\right)=\cos\left(\omega t\right)\sin\left(\pi x\right)$. This can be simulated by Schr\"{o}dinger's equation if we work with $\vec{\phi}_{0}=\sin\left(\pi x\right)$ and $d \vec{\phi}_{0}/dt=0$ as long as we start with $t=0$. The units are the same ones used in the previous figures. 
	} }
\end{figure}

\begin{figure}[ht]
	\begin{center}
		\begin{tabular}{c}
			\includegraphics[scale=0.40]{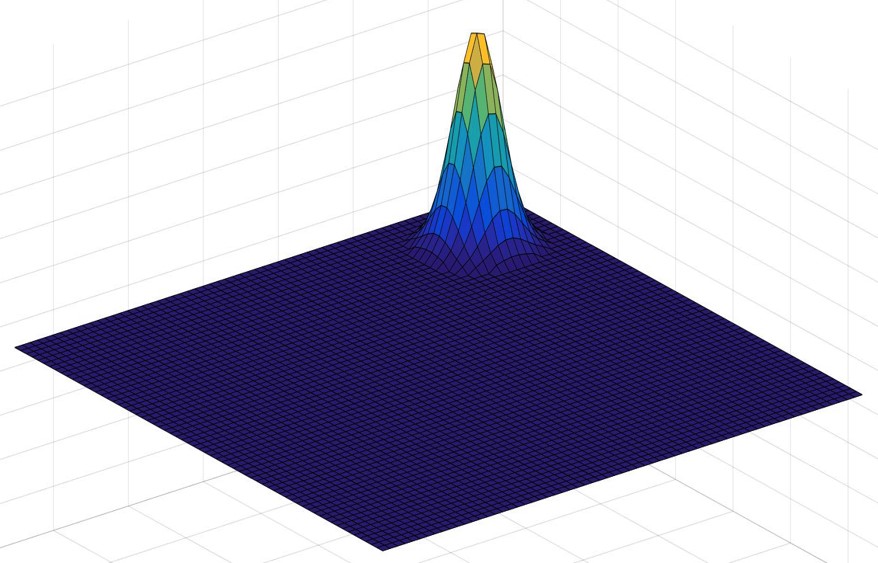}
			a)
		\end{tabular}
		\begin{tabular}{c}
			\includegraphics[scale=0.40]{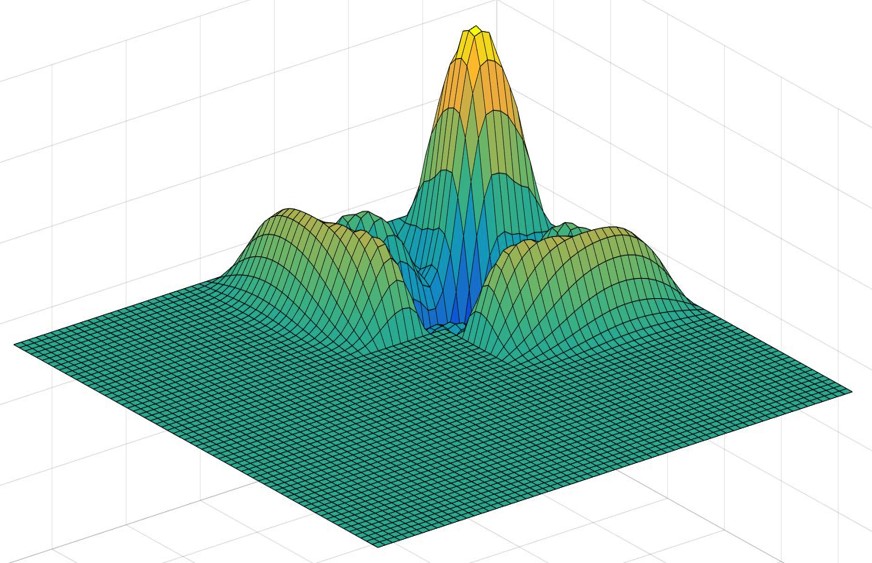}
			b)
		\end{tabular}
		\begin{tabular}{c}
			\includegraphics[scale=0.40]{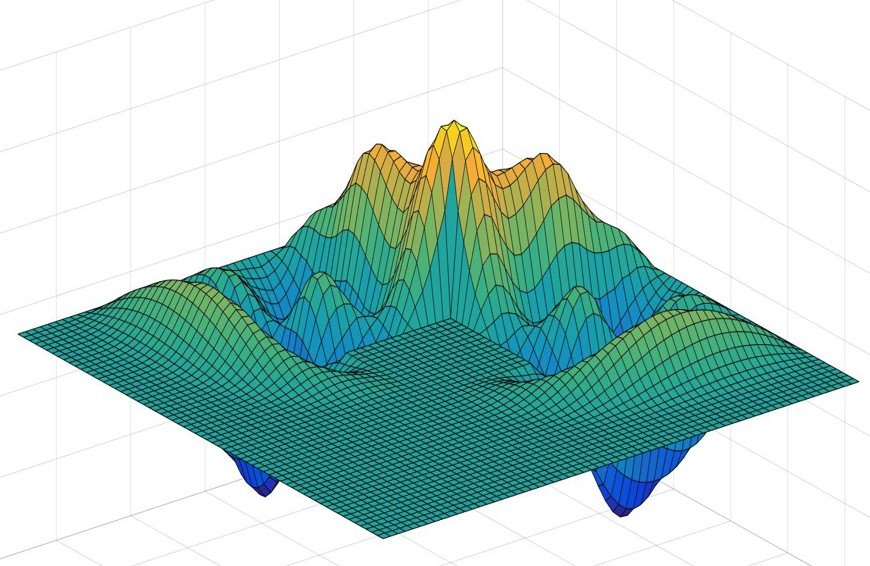}
			c)
		\end{tabular}
		\begin{tabular}{c}
			\includegraphics[scale=0.40]{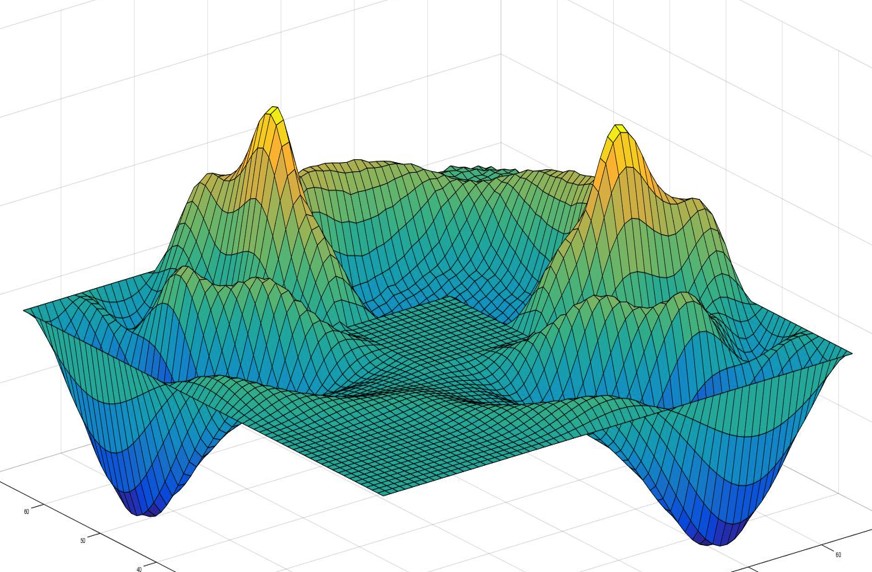}
			d)
		\end{tabular}
		\caption{{\small \label{fig:hole} \textbf{Wave packet in a cavity.} Here the initial state is a Gaussian wave packet, but now in a two dimensional region with nontrivial boundary. Specifically, we simulate scattering of the wave packet off a square object with Dirichlet boundary conditions. This is implemented as a square hole in the underlying discrete lattice.
				These four views represent the same wave packet in different time instants, where $t_{a}>t_{b}>t_{c}>t_{d}$. As in the one dimensional example, we worked with Dirichlet boundary conditions; however, the shape is not preserved. Here, the box has size ten in both axes, and we choose $a=0.1563$ and $\sigma=0.4$.}}
	\end{center}
\end{figure}


\section{Higher Order Laplacians}
\label{sec:highorder}

As we have seen, the graph Laplacian is related by a multiplicative constant to the first order approximation of the continuous Laplacian operator; however, higher order approximations might be desired to improve accuracy. In \cite{colbert1992novel} the authors give an expression for a finite difference approximation of the Laplacian operator that is based on the Lagrange interpolation formula and can be taken to arbitrarily high orders of accuracy.

The Lagrange interpolation formula is an exact formula for fitting a polynomial to a set of points $\{ x_i,f(x_i)=f_i \}$. For $2N+1$ values of $x_j$ labelled by $ j \in \{-N, -N+1, \dots N \}$, the formula is

\begin{equation} \label{lagrange}
f(x) = \sum_{k=-N}^{N} f(x_k) \prod_{l = -N, l \neq k}^{N} \left( \frac{x-x_l}{x_k-x_l} \right).
\end{equation}

Taking the second derivative of this formula gives an interpolation formula for an approximation of the Laplace operator. Assuming the values $x_j$ are taken from a uniform lattice (i.e. $x_j=ja$ for $j \in \mathbb{Z} $), we can approximate the Laplacian of $f$ at $x_0$ using
\begin{align} \label{lagrangelaplacian}
f''(x_0) & = \frac{-1}{a^2} \left( 2 f(x_0)\sum_{l=1}^{N} \frac{1}{l^2} - \sum_{k=1}^{N} \frac{f(x_k)+f(x_{-k})}{k^2} \prod_{l=-N, l \neq k}^{N}\frac{l^2}{l^2-k^2}  \right).
\end{align}

If we truncate this expression at $N=1$ then we recover the standard second order Laplacian approximation. (Recall that we define $k\th$ order to mean leading error term $O(a^k)$ on a lattice of spacing $a$.)

The next higher order ($N=2$) approximation of $f''(x_0)$ is
\begin{align}
f''(x_0) &= \frac{-1}{a^2} \left( \frac{5}{2}f(x_0) - \frac{4}{3} (f(x_1) + f(x_{-1})) + \frac{1}{12} (f(x_2) + f(x_{-2})) \right).
\end{align}

Assuming the lattice has periodic boundary conditions, then similar formulas hold at points other than $x_0$. In particular, we can write the fourth order Laplacian for a periodic lattice as
\begin{align}
L &= (-1/a^2) ((5/2) 1 -(4/3)(S+S^{\dagger})+(1/12)(S^2 +(S^{\dagger})^2) ).
\end{align}

Here $S$ is the matrix representation of the cyclic permutation $(1 2 \dots N)$, i.e., it has entries $S_{i,j} = \delta_{i,j+1 \mod N}$.

\clearpage

\section{Boundary Conditions for Higher Order Laplacians}
\label{sec:higherbc}

We can accomodate Neumann and Dirichlet boundary conditions by modifying Laplacians for periodic boundary conditions. In particular we follow the algebraic derivation described in Sec. \ref{sec:bc}. 

\subsection{Dirichlet Boundary Conditions}

As before, we consider a small neighborhood of vertices around $0$. By imposing that $\phi_j = 0$ for all $j \in \mathbb{Z}^-$, we modify the Laplacian as below.
\begin{eqnarray}
L \vec \phi &=& \begin{bmatrix}
5/2 & -4/3 & 1/12 & 0 & 0\\
-4/3 & 5/2 & -4/3 & 1/12 & 0\\
1/12 & -4/3 & 5/2 & -4/3 & 1/12\\
0 & 1/12 & -4/3 & 5/2 & -4/3 \\
0 & 0 & 1/12 & -4/3 & 5/2 
\end{bmatrix}
\begin{bmatrix}
\phi_{-2} \\
\phi_{-1} \\
\phi_{0} \\
\phi_{1} \\
\phi_{2}
\end{bmatrix}\nonumber\\
& \rightarrow & \begin{bmatrix}
0 & 0 & 0 & 0 & 0\\
0 & 0 & 0 & 0 & 0\\
0 & 0 & 5/2 & -4/3 & 1/12\\
0 & 0 & -4/3 & 5/2 & -4/3 \\
0 & 0 & 1/12 & -4/3 & 5/2 
\end{bmatrix}\nonumber
\begin{bmatrix}
0 \\
0 \\
\phi_{0} \\
\phi_{1} \\
\phi_{2}
\end{bmatrix}\nonumber\\
\end{eqnarray}

So imposing Dirichlet boundary conditions simply amounts to taking a principal submatrix.

\subsection{Neumann Boundary Conditions}

To account for Neumann boundary conditions, impose $\phi_j = \phi_0$ for all $j \in \mathbb{Z}^-$. The Laplacian is modified as below.
\begin{eqnarray}
L \vec \phi &=& \begin{bmatrix}
5/2 & -4/3 & 1/12 & 0 & 0\\
-4/3 & 5/2 & -4/3 & 1/12 & 0\\
1/12 & -4/3 & 5/2 & -4/3 & 1/12\\
0 & 1/12 & -4/3 & 5/2 & -4/3 \\
0 & 0 & 1/12 & -4/3 & 5/2 
\end{bmatrix}
\begin{bmatrix}
\phi_{-2} \\
\phi_{-1} \\
\phi_{0} \\
\phi_{1} \\
\phi_{2}
\end{bmatrix}\\
& \mapsto & \begin{bmatrix}
5/2 & -4/3 & 1/12 & 0 & 0\\
-4/3 & 5/2 & -4/3 & 1/12 & 0\\
1/12 & -4/3 & 5/2 & -4/3 & 1/12\\
0 & 1/12 & -4/3 & 5/2 & -4/3 \\
0 & 0 & 1/12 & -4/3 & 5/2 
\end{bmatrix}
\begin{bmatrix}
\phi_{0} \\
\phi_{0} \\
\phi_{0} \\
\phi_{1} \\
\phi_{2}
\end{bmatrix}\\
& = & \begin{bmatrix}
0 \\
(5/2-4/3-4/3+1/12) \phi_0 +(1/12) \phi_1 \\
(5/2-4/3+1/12) \phi_0 -(4/3) \phi_1 + (1/12) \phi_2 \\
\dots
\end{bmatrix}\\
& = & \begin{bmatrix}
0 & 0 & 0 & 0 & 0\\
0 & 0 & -1/12 & 1/12 & 0\\
0 & 0 & 5/4 & -4/3 & 1/12\\
0 & 0 & -5/4 & 5/2 & -4/3 \\
0 & 0 & 1/12 & -4/3 & 5/2 \\
\end{bmatrix}
\begin{bmatrix}
\phi_{0} \\
\phi_{0} \\
\phi_{0} \\
\phi_{1} \\
\phi_{2}
\end{bmatrix}
\end{eqnarray}

Note that this is not a symmetric approximation of the Laplacian, not even when restricted to vertices $0,1,$ and $2$. However the decoupled second order dynamics of Eqn. \ref{decouple} require symmetric operators since $B B^{\dagger}$ is Hermitian by construction, so our algorithm cannot use higher order Laplacians for simulating dynamics with Neumann boundary conditions.


\subsection{Hypergraph Incidence Matrices}

Now that we've seen how to impose Dirichlet boundary conditions on higher order Laplacians, we should consider how to generate their incidence matrices. Recall that the fourth order Laplacian with periodic boundary conditions is
\[
L = (-1/a^2) ((5/2) \id -(4/3)(S+S^{\dagger})
+(1/12)(S^2 +(S^{\dagger})^2) ),
\] 
which is a sum of circulant matrices. This suggests that a reasonable ansatz for the incidence matrix is $cS-(c+b) \id +bS^{\dagger} $. 
By construction this ansatz has zero sum rows which guarantees that the Laplacian matrix acting on a vector whose entries all have the same value will evaluate to 0 (which is consistent with the fact that the Laplacian operator acting on a constant function evaluates to 0). 

From this ansatz we arrive at the following system of degree 2 polynomial equations in $b$ and $c$.

\begin{align} \label{asys}
2(c^2+b^2+cb) & =  5/2 \\
cb & =  1/12 \\
(c+b)^2 & =  4/3
\end{align}

Once any two of these is satified the third will also be satisfied since the row sums of the matrix must all be zero. The middle equation gives us $b=1/12c$, which substituted into the last equation gives $4/3 = c^4  -(7/6)c^2 + (1/144)$ which has solutions satisfying $c^2  =  (7/12) \pm \sqrt{1/3}$. This gives values of $c \approx 1.07735$ and $b \approx 0.07735$ (switching their values gives another solution). 

\subsection{2 Dimensions and Beyond} \label{2dandbeyond}

The continuous Laplacian in 2 dimensions can be written as $\nabla^2 = \frac{\partial^2}{\partial x^2} +  \frac{\partial^2}{\partial y^2}$, i.e. the sum of the one dimensional Laplacians in the $x$ and $y$ directions (note that each of these is basis dependent although the total Laplacian is not). Discrete Laplacians in 2 dimensions are similarly constructed.

We discretize space into a square lattice and remove some edges and vertices according to boundary conditions. The resulting graph $(V,E)$ is a subgraph of the square lattice, so we can separate its edge set into vertical edges, $E_y$, and horizontal edges, $E_x$. The subgraphs associated with this partition, $G_x=(V,E_x)$ and $G_y=(V,E_y)$, are composed of several disconnected path graphs (or cycles under periodic boundary conditions). 
If the lattice is $n$ vertices wide and $m$ vertices tall then $G_x$ consists of $m$ path graphs each on $n$ vertices; similarly $G_y$ consists of $n$ path graphs each on $m$ vertices. If scatterers are introduced then the path graphs composing $G_x$ and $G_y$ will depend on what edges and vertices are removed to account for the scatterers.

Since $G_x$ and $G_y$ are composed of several disconnected path graphs, we can write down their Laplacians and factor them into incidence matrices. The Laplacians $L(G_x)$ and $L(G_y)$ approximate $\frac{\partial^2}{\partial x^2}$ and $\frac{\partial^2}{\partial y^2}$, respectively, so $L(G_x)+L(G_y)$ approximates $\nabla^2$. If $L(G_x)=B_x^{\dagger} B_x$ and $L(G_y)=B_y^{\dagger} B_y$, then $L(G_x)+L(G_y) = C^{\dagger} C$ where $C$ is the $|E_x \cup E_y| \times V$ matrix produced by vertically concatenating $B_x$ and $B_y$.

Generalizing this to $n$-dimensions, the procedure is (1) separate the lattice into $n$ graphs (each composed of disconnected paths or cycles) corresponding to each direction in space (2) write down the Laplacians for these $n$ graphs and factor them into incidence matrices and (3) vertically concatenate their incidence matrices.

\subsection{Sixth (and higher) Order Laplacians}
So far our discussion has been restricted to second and fourth order Laplacians; however, we can arrive at higher order Laplacians by (1) taking higher order expansions of the Lagrange interpolation formula, (2) differentiating twice and evaluating at $x=0$, and (3) reading off the interpolation formula coefficients as matrix coefficients. Periodic boundary conditions are achieved by requiring that the Laplacian be circulant.
As before, Dirichlet boundary conditions can be imposed by taking principal submatrices of the Laplacian. Our remarks about generalizing beyond 1-D also hold for higher order Laplacians.

The problem of finding the incidence matrices of higher order Laplacians is a little more involved that in the 1st order case where the graph theoretic interpretation facilitates the factorization. We let $N$ denote the radius of a discrete Laplacian. That is, a Laplacian matrix with nonzero entries only out to nearest neighbors has $N=1$, second nearest neighbors has $N=2$, and so on. In general the radius $N$ Laplacian will be factored into incidence matrices of hypergraphs where each hyperedge can contain up to $N+1$ vertices. (Note: Hyperedges with fewer than $N+1$ vertices will appear if Dirichlet boundary conditions are used.)

As in the $N=1$ and $N=2$ (\emph{i.e.} second and fourth order) cases, the entries of these incidence matrices can be found by considering the factorization of a Laplacian with periodic boundary conditions. The translational invariance of this case guarantees that all hyperedges will have the same weights and can be oriented identically. Then the entries of the incidence matrix can be found by choosing an appropriate ansatz (one of the form $\sum_{j=-m}^n a_j S^j$ for some $n$ and $m$) and solving the appropriate system of polynomial equations (similar to how \ref{asys} was solved). We provide numerical values for the entries of Laplacians and their incidence matrices up to tenth order in appendix \ref{ordertables}.


\section{Discretization Errors}
\label{sec:errors}

Using a $k^{\mathrm{th}}$ order Laplacian, as described in \sect{sec:highorder} one expects discretization errors to shrink with lattice spacing as $O(a^k)$. To obtain a more quantitative assessment of discretization errors, we can numerically compute a metric called the \emph{Q factor}, which is used to quantify discretization errors in numerical simulations \cite{mexico}.

To compute this factor we use the discretized solutions at three different lattice spacings $\varPhi^{a}$ , $\varPhi^{2a}$ and $\varPhi^{4a}$. The Q factor is then defined by
\begin{equation}
\label{Qfactor}
Q\left(t\right)=\frac{\left\Vert \varPhi^{4a}-\varPhi^{2a}\right\Vert _{2}}{\left\Vert \varPhi^{2a}-\varPhi^{a}\right\Vert _{2}}.
\end{equation}
$\varPhi^{4a}$ and $\varPhi^{2a}$ are defined on different lattices, and thus they are vectors of different dimension. However, we choose the lattices so that the vertices present in the lattice of spacing $4a$ are a subset of the vertices present in the lattice of spacing $2a$. Then, by $\left\Vert \varPhi^{4a}-\varPhi^{2a}\right\Vert$ we really mean the $l_2$ norm of the vector $\varPhi^{4a}-I_{4a}(\varPhi^{2a})$, where $I_{4a}$ is the inclusion map that discards the vector components associated with the vertices absent from the lattice of spacing $4a$. For notational simplicity we drop explicit reference to this inclusion map.

Now we want to see the value associated with Eq.(\ref{Qfactor}) when we take the continuum limit, $a\rightarrow0$. Straightforward Taylor expansion shows that a $k^{\textrm{th}}$ order discretized Laplacian, which leaves errors of order $a^k$ should yield a corresponding Q factor of $2^k$ in the limit of $a$ going to zero, provided errors from other steps in the algorithm, such as state preparation do not dominate. Now we present a table of values that shows the average of $Q$ from $t=0$ to $t=0.5$, working with $0.0001$ as the time step.
\noindent \begin{center}
	\begin{tabular}{|l|c|c|}
		\hline
		& Second order & Fourth order\tabularnewline
		\hline
		\hline
		$\left\langle Q\right\rangle_{\text{spreading}}$ &3.98 & 15.69 \tabularnewline
		\hline
		$\left\langle Q\right\rangle_{\text{rigidly-translating}}$ &1.99 &2.00\tabularnewline
		\hline
		$\left\langle Q\right\rangle_{\text{standing}}$ & 3.99 & 15.89\tabularnewline
		\hline
	\end{tabular}
	\par\end{center}

One sees that for the spreading wave packet case and the standing wave case (both static initial conditions) the Q factors are in good agreement with the expected values of 4 and 16 for the second-order and fourth-order Laplacians. In the case of the rigidly translating wave packet (which corresponds to the initial condition of \eq{rigid1d}), the Q factor is approximatly 2 independent of the order of the discretized Laplacian. This is because, in this case, the dominant source of error is in the state preparation. Exact state preparation would involve inverting the incidence matrix, as described in \sect{sec:general}. The initial state described by \eq{rigid1d} is accurate only up to errors of order $a$, thus yielding a Q factor of 2. In appendix \ref{Q_statics} we also obtain an analytical calculation of the Q factor for the special case of a standing wave, treated with a first-order Laplacian.

Since a $k^{\textrm{th}}$ order Laplacian gives truncation errors of order $a^k$, the total error accumulated for evolution time $T$ will be order $a^kT$. A $D$ dimensional Laplacian of order $k$ has an incidence matrix which is $D(k/2+1)$-sparse; so if an $s$-sparse Hamiltonian is used then $k=2(s/D)-2$. Then the total error accumulated is on the order of $Ta^{2(s/D)-2}$.

\section{Smoothness}
\label{sec:smoothness}

In preceding sections we have discussed the impact of using higher order discretizations to minimize error. In general, both classically and quantumly, one chooses the order of the discretization of the Laplacian on a lattice to obtain discretization errors of order $a^k$, where $a$ is the lattice spacing. The choice of $k$ is influenced by the smoothness of the underlying continuum solution that one is attempting to discretize. A high order discretization with error $O(a^k)$ of an $m\th$ derivative is only justified if the exact solution is $(k+m)$-times differentiable, since any such discretization of an $m\th$ derivative is derived by Taylor expanding the exact solution to order $k+m$. Furthermore, knowing the magnitude of these higher derivatives allows quantitative error bounds to be derived, as we show in this section.

\begin{theorem}
	\label{blowup_lemma}
	Let $\Omega$ be a bounded convex domain in $\mathbb{R}^d$. Let $f$ be a smooth function on $\Omega$ that vanishes on the boundaries. Let $\vec{v}(\vec{x})$ be the solution to
	\begin{equation}
	\label{gausseq}
	\vec{\nabla} \cdot \vec{v}(\vec{x}) = f(\vec{x})
	\end{equation}
	on $\Omega$ with no divergenceless component. Then,
	\begin{equation}
	\label{blowup_bound}
	\sqrt{\int_\Omega d^d x \ \vec{v}(\vec{x}) \cdot \vec{v}(\vec{x})} \leq \frac{\ell}{\pi} \sqrt{\int_\Omega d^d x \ f(\vec{x})^2}
	\end{equation}
	where $\ell$ is the diameter of $\Omega$.
\end{theorem}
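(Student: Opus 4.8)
The plan is to reduce the vector inequality to a scalar spectral estimate by introducing a potential. Because $\vec{v}$ has no divergenceless component it is curl-free, and since $\Omega$ is convex (hence simply connected) we may write $\vec{v} = \vec{\nabla}\psi$ for a scalar potential $\psi$; the constraint \eq{gausseq} then becomes the Poisson equation $\nabla^2\psi = f$. The boundary condition on $\psi$ is fixed by the requirement that $\vec{v}$ be orthogonal to every divergenceless field: testing $\int_\Omega \vec{\nabla}\psi\cdot\vec{w}\,d^dx = \int_{\partial\Omega}\psi\,(\vec{w}\cdot\hat{n})$ against all $\vec{w}$ with $\vec{\nabla}\cdot\vec{w}=0$ forces $\psi$ to be constant on $\partial\Omega$, and since only $\vec{\nabla}\psi$ enters we may take $\psi|_{\partial\Omega}=0$. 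This is the continuum image of $\vec{v} \propto B^+\dot{\phi}_0 = B^\dagger L^+\dot{\phi}_0$, i.e.\ a discrete gradient of the solution of a Dirichlet Poisson problem, and it is consistent with the hypothesis that $f$ vanishes on $\partial\Omega$.

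Next I would turn the inequality into an eigenvalue bound. Let $\{u_k\}$ be an orthonormal basis of Dirichlet eigenfunctions of $-\nabla^2$ on $\Omega$, with $-\nabla^2 u_k = \lambda_k u_k$ and $0 < \lambda_1 \le \lambda_2 \le \cdots$, and expand $\psi = \sum_k c_k u_k$. Using $\int_\Omega \vec{\nabla}u_j\cdot\vec{\nabla}u_k = \lambda_k\delta_{jk}$ (valid because the $u_k$ vanish on the boundary) gives
\begin{equation}
\int_\Omega \vec{v}\cdot\vec{v} = \int_\Omega |\vec{\nabla}\psi|^2 = \sum_k \lambda_k c_k^2, \qquad \int_\Omega f^2 = \int_\Omega (\nabla^2\psi)^2 = \sum_k \lambda_k^2 c_k^2 .
\end{equation}
The squared ratio is thus a weighted average of the numbers $1/\lambda_k$ with nonnegative weights $\lambda_k^2 c_k^2$, so
\begin{equation}
\frac{\int_\Omega \vec{v}\cdot\vec{v}}{\int_\Omega f^2} = \frac{\sum_k (1/\lambda_k)\,\lambda_k^2 c_k^2}{\sum_k \lambda_k^2 c_k^2} \le \frac{1}{\lambda_1}.
\end{equation}
Hence \eq{blowup_bound} is equivalent to the sharp statement that the lowest Dirichlet eigenvalue satisfies $\lambda_1 \ge (\pi/\ell)^2$. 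The same reduction can be reached without eigenfunctions by combining the integration by parts $\int_\Omega|\vec{\nabla}\psi|^2 = -\int_\Omega \psi f$ (the boundary term vanishing by the chosen condition) and Cauchy--Schwarz with the Poincar\'e inequality, but the spectral form makes the role of $\lambda_1$ transparent.

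Finally I would establish the eigenvalue bound, which is where the constant $\pi$ appears and which I expect to be the crux. Since $\Omega$ has diameter $\ell$, its orthogonal projection onto any coordinate axis is an interval of length at most $\ell$, so $\Omega$ is contained in a slab $\{\,\alpha \le x_1 \le \alpha+\ell\,\}$ of width $\ell$. By domain monotonicity of Dirichlet eigenvalues, $\lambda_1(\Omega) \ge \lambda_1(\mathrm{slab})$, and the bottom of the Dirichlet spectrum of the slab is the one-dimensional mode $\sin\big(\pi(x_1-\alpha)/\ell\big)$, giving exactly $(\pi/\ell)^2$. Combining this with the previous step yields \eq{blowup_bound}.

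The main obstacle is precisely this sharp constant: any crude Poincar\'e argument gives a bound of the correct form but with a suboptimal constant, so the real work is in extracting exactly $\pi$. The slab/domain-monotonicity argument does this cleanly for Dirichlet data and in fact uses only the diameter; if instead one prefers to route the argument through the Neumann spectrum (whose zero mode drops harmlessly out of both sums), then convexity becomes essential and the required bound $\mu_1 \ge (\pi/\ell)^2$ is exactly the Payne--Weinberger inequality. A secondary point to handle with care is the identification of the correct boundary condition in the first paragraph, since that is what guarantees that all boundary terms vanish in the integrations by parts.
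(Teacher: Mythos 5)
Your proof is correct, and it takes a genuinely different route from the paper's. The paper argues operator-theoretically: it writes $\nabla^2 = \nabla^\dag \nabla$, identifies the divergence with $\nabla^\dag$ so that the nonzero singular values of the divergence are the square roots of nonzero eigenvalues of a scalar Laplacian, and then invokes the fundamental gap theorem \cite{PW60, AC11} --- the first nontrivial \emph{Neumann} eigenvalue of a bounded convex domain is at least $\pi^2/\ell^2$ --- to conclude $\|\textrm{Div}^{-1}\| \leq \ell/\pi$. You instead make the minimal-norm structure explicit: $\vec{v} = \vec{\nabla}\psi$ with $\psi$ the \emph{Dirichlet} solution of the Poisson equation, followed by a spectral expansion and the elementary bound $\lambda_1^{\mathrm{Dir}}(\Omega) \geq \pi^2/\ell^2$ via slab containment and domain monotonicity. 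Your identification of the boundary condition is in fact the cleaner reading of ``no divergenceless component'': Moore--Penrose orthogonality to the full kernel of the divergence forces $\vec{v}$ into the $L^2$-orthogonal complement of all divergence-free fields, which is exactly the space of gradients of functions vanishing on $\partial\Omega$, whereas the paper's Neumann framing corresponds to the other Hodge splitting (gradients of unconstrained potentials versus divergence-free fields with vanishing normal trace); since $\mu_1^{\mathrm{Neu}} \leq \lambda_1^{\mathrm{Dir}}$, both choices yield \eq{blowup_bound} on convex domains. What your route buys: it avoids the deep Payne--Weinberger/Andrews--Clutterbuck input entirely and, as you note, shows convexity is superfluous in the Dirichlet formulation --- only the diameter matters. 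What the paper's route buys: brevity, and a singular-value framing that directly mirrors its discrete claim that the condition number of the incidence matrix $B$ is the square root of that of $L$. One small point in your favor: the paper's sentence that the smallest nonzero singular value of the divergence ``can be at most $\pi/\ell$'' is a slip (it should read ``at least''), while your inequalities all run in the correct direction.
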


\begin{proof}
	The divergence operator is not invertible because it has a kernel. However, it does have a Moore-Penrose pseudoinverse $\textrm{Div}^{-1}$, which is typically expressed in terms of the Green's function, as follows.
	\begin{equation}
	\label{divGreens}
	\textrm{Div}^{-1}[f](\vec{x}) = \int_{\Omega} d^d y \ f(\vec{y}) \frac{\vec{y} - \vec{x}}{|\vec{y}-\vec{x}|^d}.
	\end{equation}
	We next note that the Laplacian operator can be written as $\nabla^2 = \nabla^\dag \nabla$. (Here, we think of $\nabla$ as a column vector of partial derivative operators.) The singular values of the Laplacian are therefore the squares of the singular values of $\nabla^\dag$, which is the Divergence operator. The fundamental gap theorem \cite{PW60, AC11} states that on a convex bounded domain $\Omega$, the smallest nontrivial eigenvalue of the Laplacian subject to Neumann boundary conditions is lower bounded by $\pi^2/\ell^2$ where $\ell$ is the diameter of $\Omega$. Consequently, the smallest nonzero singular value of $\nabla^\dag$, \emph{i.e.} the divergence operator, can be at most $\pi/\ell$. Hence the largest singular value of $\textrm{Div}^{-1}$ can be at most $\ell/\pi$. Thus we obtain \eq{blowup_bound}.
\end{proof}

\begin{theorem}
	\label{accumulation}
	Let $\mathcal{D}$ be Hermitian linear combination of finite-order partial derivatives on $\mathbb{R}^d$. Let $\phi_{\lambda}$ be the solution to
	\begin{equation}
	\label{lambdawave}
	\frac{\partial^2 \phi_\lambda}{\partial t^2} = \nabla^2 \phi_{\lambda} - \lambda^2 \mathcal{D}^2 \phi_\lambda
	\end{equation}
	on some continuous domain $\Omega \subset \mathbb{R}^d$ subject to Dirichlet or Neumann boundary conditions. We take initial conditions at $t=0$ to be fixed functions $\phi(\vec{x},0)$, and $\dot{\phi}(\vec{x},0)$ independent of $\lambda$. Then for any $\epsilon \in \mathbb{R}$ and any $t \geq 0$
	\begin{equation}
	\label{generalbound}
	\|\phi_{\epsilon}(t)-\phi_{0}(t)\|\leq\sqrt{2t\epsilon}\left[\left(\|\phi(0)\|^{2}+\sum_{j=1}^{d}\|\|\psi_{j}(0)\|^{2}\right) \left(\|\mathcal{D}\phi(0)\|^{2}+\sum_{j=1}^{d}\|\mathcal{D}\psi_{j}(0)\|^{2}\right)\right]^{1/4}
	\end{equation}
	where $\| f \| \equiv \sqrt{\int_\Omega d^d x |f(\vec{x})|^2}$ and
	\begin{equation}
	\vec{\psi}(\vec{x}, 0) = \int d^d y \frac{\vec{x}-\vec{y}}{|\vec{x} - \vec{y}|^d} \dot{\phi}(\vec{y},0).
	\end{equation}
\end{theorem}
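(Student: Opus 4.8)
The plan is to recast the perturbed wave equation \eq{lambdawave} as a Schr\"odinger equation with a block Hamiltonian, exactly as in \eq{Block_hamiltonian}--\eq{decouple}, and then treat $\lambda^2\mathcal{D}^2$ as a linear-in-$\lambda$ perturbation of the free generator. Since $\mathcal{D}$ is Hermitian, $-\nabla^2+\lambda^2\mathcal{D}^2 = M_\lambda^\dagger M_\lambda$ where $M_\lambda$ is the vertical stacking of $\nabla$ and $\lambda\mathcal{D}$, so on the Hilbert space $\mathcal{H}_V\oplus\mathcal{H}_\nabla\oplus\mathcal{H}_{\mathcal{D}}$ the generator is $H_\lambda = H_0 + \lambda V$, where $H_0$ is the ordinary wave Hamiltonian coupling the vertex block to the $\nabla$-momentum block, and $V$ couples the vertex block to an auxiliary $\mathcal{D}$-momentum block; both $H_0$ and $V$ are Hermitian. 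I would take the initial state $\Psi^{in}$ to have vertex block $\phi(0)$, $\nabla$-block $i\vec{\psi}(0)$ (so that $\dot\phi(0)$ is reproduced through \eq{Gausslike}, using the divergence pseudoinverse of Theorem \ref{blowup_lemma}), and $\mathcal{D}$-block zero. This makes the initial data $\lambda$-independent as required and gives $\|\Psi^{in}\|^2 = \|\phi(0)\|^2 + \sum_j\|\psi_j(0)\|^2 =: E_0$, the first factor in \eq{generalbound}. Because the vertex block is an orthogonal projection of the full state, it suffices to bound $\|\Psi_\epsilon(t)-\Psi_0(t)\|$, where $\Psi_\lambda(t) = e^{-iH_\lambda t}\Psi^{in}$.

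The core estimate comes from expanding the overlap of the two evolutions. Writing $\|\Psi_\epsilon(t)-\Psi_0(t)\|^2 = 2\|\Psi^{in}\|^2 - 2\,\mathrm{Re}\langle\Psi^{in}, e^{iH_\epsilon t}e^{-iH_0 t}\Psi^{in}\rangle$ and differentiating $U(t)=e^{iH_\epsilon t}e^{-iH_0 t}$ (Duhamel's formula, $U(t)=I+i\epsilon\int_0^t e^{iH_\epsilon s}Ve^{-iH_0 s}\,ds$) yields the exact identity
\begin{equation}
\|\Psi_\epsilon(t)-\Psi_0(t)\|^2 = 2\epsilon\int_0^t \mathrm{Im}\,\langle\Psi^{in},\, e^{iH_\epsilon s}Ve^{-iH_0 s}\Psi^{in}\rangle\, ds .
\end{equation}
Applying Cauchy--Schwarz to the integrand and using unitarity of $e^{iH_\epsilon s}$ bounds it by $\|\Psi^{in}\|\,\|V\Psi_0(s)\|$, where $\Psi_0(s)=e^{-iH_0 s}\Psi^{in}$. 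Since $H_0$ has a vanishing row and column on the $\mathcal{D}$-block, that block stays zero along the free evolution, so $\|V\Psi_0(s)\| = \|\mathcal{D}\phi_0(s)\|$, with $\phi_0$ the genuine unperturbed wave solution.

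The remaining task is to control $\|\mathcal{D}\phi_0(s)\|$ uniformly in $s$. The key fact I would use is conservation: the block-diagonal operator $\widetilde{\mathcal{D}}=\mathrm{diag}(\mathcal{D},\mathcal{D})$ commutes with $H_0$, so $\|\widetilde{\mathcal{D}}\Psi_0(s)\|^2 = \|\mathcal{D}\phi_0(s)\|^2 + \|\mathcal{D}\chi(s)\|^2$ is constant in time and equals $\|\mathcal{D}\phi(0)\|^2 + \sum_j\|\mathcal{D}\psi_j(0)\|^2 =: F$, the second factor in \eq{generalbound}. In particular $\|\mathcal{D}\phi_0(s)\|\le\sqrt{F}$. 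Substituting gives $\|\Psi_\epsilon(t)-\Psi_0(t)\|^2 \le 2\epsilon\,\|\Psi^{in}\|\,\sqrt{F}\,t = 2t\epsilon\sqrt{E_0 F}$, and taking square roots yields \eq{generalbound}. Conceptually this says that $\mathcal{D}$ applied to a solution of the free wave equation is again a solution whose encoded energy is conserved, and the geometric mean $\sqrt{E_0 F}$ arises from the single Cauchy--Schwarz step.

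I expect the main obstacle to be justifying the commutation $[\widetilde{\mathcal{D}},H_0]=0$, and hence the conservation of $\|\mathcal{D}\phi_0(s)\|$, on the bounded domain $\Omega$ subject to Dirichlet or Neumann conditions. In the bulk this is immediate, since $\mathcal{D}$ and $\nabla$ are constant-coefficient differential operators and therefore commute; at the boundary one must verify that $\mathcal{D}$ preserves the imposed boundary conditions so that no boundary terms spoil the energy balance and $\mathcal{D}\phi_0$ remains an admissible solution. This is the delicate analytic point, and it is presumably where the restriction to Dirichlet/Neumann conditions and the specific form of $\mathcal{D}$ are used; the rest is the robust Duhamel-plus-Cauchy--Schwarz argument above.
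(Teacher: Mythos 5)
Your proof is correct, and its skeleton matches the paper's: encode the second-order problem as a Schr\"odinger equation for an interpolating family $H_\lambda = H_0 + \lambda\,(\textrm{perturbation})$, control the overlap of the two evolutions (your Duhamel integral is exactly the integrated form of the paper's computation $\frac{d}{dt}\langle S_\epsilon, S_0\rangle = i\epsilon\langle S_\epsilon(t), H_{\mathcal{D}} S_0(t)\rangle$), apply Cauchy--Schwarz plus unitarity, and convert the perturbation's norm along the free flow into an initial-time quantity via a Hermitian companion operator commuting with $H_0$ (your $\widetilde{\mathcal{D}}$ plays the role of the paper's $H_{\mathcal{D}}^{+}$). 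Where you genuinely diverge is the block decomposition. The paper's main-text proof stays on the $(1+d)$-component space $(\phi,\vec{\psi})$ and perturbs by the block-diagonal $H_{\mathcal{D}} = \mathrm{diag}(\mathcal{D}, -\mathcal{D}, \ldots, -\mathcal{D})$, with the alternating signs engineered so that the $O(\lambda)$ cross terms in $H_\lambda^2$ cancel and the top block obeys \eq{lambdawave}; you instead enlarge the Hilbert space by a $\mathcal{D}$-momentum block coupled off-diagonally through $V$ --- which is precisely the structure the paper itself adopts in appendix \ref{alt_smooth} (theorem \ref{alt_accumulation}, with $\nabla_k$ in place of your single $\mathcal{D}$). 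Your variant buys two small things: because the auxiliary block is initialized to zero, the first-order initial data $\dot{\phi}_\lambda(0)$ is manifestly $\lambda$-independent, whereas in the main-text encoding the diagonal perturbation injects an $O(\lambda)$ term proportional to $\lambda \mathcal{D}\phi(0)$ into the initial velocity, a wrinkle the paper leaves unaddressed; and $\|V\Psi_0(s)\|$ collapses to $\|\mathcal{D}\phi_0(s)\|$ rather than carrying the $\psi$-components (you then recover the same constant via $\|\mathcal{D}\phi_0(s)\| \leq \|\widetilde{\mathcal{D}}\Psi_0(s)\|$, so the final bounds coincide). Finally, the boundary issue you flag --- whether the commutation $[\widetilde{\mathcal{D}}, H_0] = 0$ and the resulting norm conservation survive Dirichlet or Neumann boundary terms --- is equally implicit in the paper's proof, which asserts commutation and Hermiticity without examining the boundary, so you are not missing a step that the paper actually supplies.
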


\begin{proof}
	Let
	
	\begin{eqnarray}
	S_\lambda & = & \left[ \begin{array}{c} \phi_\lambda \\ \vec{\psi}_\lambda \end{array} \right] \label{slambda}\\
	H_0 & = & \left[ \begin{array}{cccc}
	0 & \frac{\partial}{\partial x_1} & \hdots & \frac{\partial}{\partial x_d} \\
	-\frac{\partial}{\partial x_1} & 0 & \hdots & 0 \\
	\vdots & 0 & \hdots & 0 \\
	-\frac{\partial}{\partial x_d} & 0 & \hdots & 0 \end{array} \right] \\
	H_{\mathcal{D}} & = & \left[ \begin{array}{cccc}
	\mathcal{D} & 0 & \hdots & 0 \\
	0 & -\mathcal{D} & \hdots & 0 \\
	\vdots & & \ddots & \\
	0 & 0 & \hdots & -\mathcal{D} \end{array} \right] \\
	H_\lambda &  = & H_0 + \lambda H_{\mathcal{D}} \\
	\frac{dS_{\lambda}}{dt} & = & -i H_\lambda S_\lambda. \label{fake_Schrod}
	\end{eqnarray}
	By \eq{fake_Schrod},
	\begin{eqnarray}
	\frac{d^2}{dt^2} S_\lambda & = & - H_\lambda^2 S_\lambda \\
	& = & \left[ \begin{array}{cccc}
	\nabla^2 - \lambda^2 \mathcal{D}^2 & 0 & \ldots & 0 \\
	0 & \frac{\partial^2}{\partial x_1^2} - \lambda^2 \mathcal{D}^2 & \ldots & \frac{\partial^2}{\partial x_1 \partial x_d} \\
	\vdots & & \ddots & \\
	0 & \frac{\partial^2}{\partial x_d \partial x_1} & \ldots & \frac{\partial^2}{\partial x_d^2} - \lambda^2 \mathcal{D}^2
	\end{array} \right] \left[ \begin{array}{c} \phi \\ \psi_1 \\ \vdots \\ \psi_d \end{array} \right] \nonumber. 
	\end{eqnarray}
	Thus the solution to \eq{fake_Schrod} satisfies \eq{lambdawave}. As initial conditions $(t=0)$ for $\vec{\psi}_\lambda$ we can take
	\begin{equation}
	\vec{\psi}(0) = \textrm{Div}^{-1} \left[ \dot{\phi}(0) \right]
	\end{equation}
	where $\textrm{Div}^{-1}$ is as defined in \eq{divGreens}. 
	By \eq{fake_Schrod} we have
	\begin{eqnarray}
	\frac{d}{dt}\langle S_{\epsilon},S_{0}\rangle&=&\left\langle \dot{S}_{\epsilon}(t),S_{0}(t)\right\rangle +\left\langle S_{\epsilon}(t),\dot{S}_{0}(t)\right\rangle \\
	&=&\langle-i(H_{0}+\epsilon H_{\mathcal{D}})S_{\epsilon}(t),S_{0}(t)\rangle+\langle S_{\epsilon}(t),-iH_{0}S_{0}(t)\rangle 
	\\
	&=&\langle S_{\epsilon}(t),i(H_{0}+\epsilon H_{\mathcal{D}})S_{0}(t)\rangle+\langle S_{\epsilon}(t),-iH_{0}S_{0}(t)\rangle \\
	&=&i\epsilon\langle S_{\epsilon}(t),H_{\mathcal{D}}S_{0}(t)\rangle.
	\end{eqnarray}
	Thus, by the Cauchy-Schwarz inequality
	\begin{equation}
	\left| \frac{d}{dt} \langle S_\epsilon, S_0 \rangle \right| \leq \epsilon \| S_\epsilon(t) \| \times \| H_{\mathcal{D}} S_0(t) \|,
	\end{equation}
	where $\| S \|$ is a shorthand for $\sqrt{ \langle S, S \rangle }$. $H_\lambda$ is Hermitian for real $\lambda$ and therefore $\| S_\epsilon(t) \| = \| S_\epsilon(0) \|$.
	\begin{equation}
	\label{chbound1}
	\left| \frac{d}{dt} \langle S_\epsilon, S_0 \rangle \right| \leq \epsilon \| S_\epsilon(0) \|  \times \| H_{\mathcal{D}} S_0(t) \|.
	\end{equation}
	Next, we observe that
	\begin{equation}
	\| H_{\mathcal{D}} S_0(t) \| = \| H_{\mathcal{D}}^{+} S_0(t) \|
	\end{equation}
	where the operator
	\begin{equation}
	H_{\mathcal{D}}^{+} = \left[ \begin{array}{cccc}
	\mathcal{D} & 0 & \ldots & 0 \\
	0 & \mathcal{D} & \ldots & 0 \\
	\vdots & & \ddots & \\
	0 & 0 & \ldots & \mathcal{D} \end{array} \right]
	\end{equation}
	is Hermitian and commutes with $H_0$. Thus, $\| H_{\mathcal{D}} S_0(t) \|$ is conserved, and \eq{chbound1} becomes
	\begin{equation}
	\label{chbound2}
	\left| \frac{d}{dt} \langle S_\epsilon, S_0 \rangle \right| \leq \epsilon \| S_\epsilon(0) \|  \times \| H_{\mathcal{D}} S_0(0) \|,
	\end{equation}
	which expands out to
	\begin{equation}
	\label{changebound}
	\left| \frac{d}{dt} \langle S_\epsilon, S_0 \rangle \right| \leq \epsilon \sqrt{ \| \phi(0) \|^2 + \sum_{j=1}^d \| \psi_j(0) \|^2 }
	\sqrt{ \| \mathcal{D} \phi(0) \|^2 + \sum_{j=1}^d \| \mathcal{D} \psi_j(0) \|^2 }.
	\end{equation}
	By definition
	\begin{eqnarray}
	\|S_{\epsilon}(t)-S_{0}(t)\|&=&\langle S_{\epsilon}(t)-S_{0}(t),S_{\epsilon}(t)-S_{0}(t)\rangle\\&=&\langle S_{\epsilon}(t),S_{\epsilon}(t)\rangle+\langle S_{0}(t),S_{0}(t)\rangle -2\mathrm{Re}\langle S_{\epsilon},S_{0}\rangle
	\end{eqnarray}
	The ``Hamiltonians'' $H_0$ and $H_{\mathcal{D}}$ are Hermitian so $\langle S_\epsilon(t), S_\epsilon(t) \rangle$ and  $\langle S_0(t), S_0(t) \rangle$ are time-independent. Thus,
	\begin{equation}
	\frac{d}{dt} \| S_\epsilon(t) - S_0(t)\|^2 =  -2 \mathrm{Re} \frac{d}{dt} \langle S_\epsilon, S_0 \rangle.
	\end{equation}
	Applying \eq{changebound} yields
	\begin{equation}
	\label{changebound2}
	\left|\frac{d}{dt}\|S_{\epsilon}(t)-S_{0}(t)\|^{2}\right|\leq2\epsilon\sqrt{\|\phi(0)\|^{2}+\sum_{j=1}^{d}\|\psi_{j}(0)\|^{2}}\sqrt{\|\mathcal{D}\phi(0)\|^{2}+\sum_{j=1}^{d}\|\mathcal{D}\psi_{j}(0)\|^{2}}.
	\end{equation}
	The triangle inequality and \eq{changebound2} yield
	
	\begin{eqnarray}
	\|S_{\epsilon}(t)-S_{0}(t)\|^{2}&=&\|S_{\epsilon}(0)-S_{0}(0)\|^{2}+\int_{0}^{t}d\tau\frac{d}{d\tau}\|S_{\epsilon}(\tau)-S_{0}(\tau)\|^{2}\nonumber\\&\leq&\|S_{\epsilon}(0)-S_{0}(0)\|^{2}+\int_{0}^{t}d\tau\left|\frac{d}{d\tau}\|S_{\epsilon}(\tau)-S_{0}(\tau)\|^{2}\right|\nonumber\\&\leq&\|S_{\epsilon}(0)-S_{0}(0)\|^{2}+2t\epsilon\sqrt{\|\phi(0)\|^{2}+\sum_{j=1}^{d}\|\psi_{j}(0)\|^{2}}\sqrt{\|\mathcal{D}\phi(0)\|^{2}+\sum_{j=1}^{d}\|\mathcal{D}\psi_{j}(0)\|^{2}}.\nonumber
	\end{eqnarray}
	The initial conditions have $S_\epsilon(0) = S_0(0)$, and therefore
	\begin{equation}
	\label{Sdiff}
	|S_{\epsilon}(t)-S_{0}(t)\|^{2}\leq2t\epsilon\sqrt{\|\phi(0)\|^{2}+\sum_{j=1}^{d}\|\psi_{j}(0)\|^{2}}\sqrt{\|\mathcal{D}\phi(0)\|^{2}+\sum_{j=1}^{d}\|\mathcal{D}\psi_{j}(0)\|^{2}}.
	\end{equation}
	Recalling the definition of $S_\lambda$ \eq{slambda},
	\begin{equation}
	\| S_\epsilon(t) - S_0(t) \|^2 = \| \phi_\epsilon(t) - \phi_0(t) \|^2 + \| \vec{\psi}_\epsilon(t) - \vec{\psi}_0(t) \|^2. 
	\end{equation}
	Thus \eq{Sdiff} implies the bound
	\begin{equation}
	\|\phi_{\epsilon}(t)-\phi_{0}(t)\|^{2}\leq2t\epsilon\sqrt{\|\phi(0)\|^{2}+\sum_{j=1}^{d}\|\psi_{j}(0)\|^{2}}\sqrt{\|\mathcal{D}\phi(0)\|^{2}+\sum_{j=1}^{d}\|\mathcal{D}\psi_{j}(0)\|^{2}}.\nonumber
	\end{equation}
	From this we obtain the final bound.
\end{proof}
In the special case that $\mathcal{D} = \nabla^2$ and $\Omega$ is convex we can bound $\| H_\mathcal{D} S_0(0) \|$ in terms of more accessible quantities, as follows.
\begin{theorem}
	\label{accumulation_lap}
	Let $\phi_{\lambda}$ be the solution to
	\begin{equation}
	\frac{\partial^2 \phi_\lambda}{\partial t^2} = \nabla^2 \phi_{\lambda} - \lambda^2 \left( \nabla^2 \right)^2 \phi_\lambda
	\end{equation}
	on some convex domain $\Omega \subset \mathbb{R}^d$ subject to Dirichlet or Neumann boundary conditions. We take initial conditions at $t=0$ to be fixed functions $\phi(\vec{x},0)$, and $\dot{\phi}(\vec{x},0)$ independent of $\lambda$. Then for any $\epsilon \in \mathbb{R}$ and any $t \geq 0$
	\begin{equation}
	\label{lapbound}
	\|\phi_{\epsilon}(t)-\phi_{0}(t)\|\leq\sqrt{2t\epsilon\|\nabla^{2}\phi(0)\|}\left(\|\phi(0)\|^{2}+\frac{\ell^{2}}{\pi^{2}}\|\dot{\phi}(0)\|^{2}\right)^{1/4}.
	\end{equation}
	where $\| f \| \equiv \sqrt{\int_\Omega d^d x |f(\vec{x})|^2}$.
\end{theorem}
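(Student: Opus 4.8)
The plan is to derive \eq{lapbound} as the $\mathcal{D}=\nabla^2$ case of Theorem~\ref{accumulation}, whose conclusion in this setting reads
\[
\|\phi_{\epsilon}(t)-\phi_{0}(t)\|\le\sqrt{2t\epsilon}\left[\left(\|\phi(0)\|^2+\sum_{j=1}^d\|\psi_j(0)\|^2\right)\left(\|\nabla^2\phi(0)\|^2+\sum_{j=1}^d\|\nabla^2\psi_j(0)\|^2\right)\right]^{1/4}.
\]
Thus everything reduces to replacing the two bracketed factors by the accessible quantities $\|\phi(0)\|^2+\frac{\ell^2}{\pi^2}\|\dot\phi(0)\|^2$ and $\|\nabla^2\phi(0)\|^2$. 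Once that is done, factoring $\|\nabla^2\phi(0)\|^2$ out of the product, taking the fourth root, and absorbing $\|\nabla^2\phi(0)\|^{1/2}$ into the prefactor $\sqrt{2t\epsilon}$ produces $\sqrt{2t\epsilon\|\nabla^2\phi(0)\|}$ and reproduces \eq{lapbound}. I would bound the two factors independently.

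For the first factor I would apply Theorem~\ref{blowup_lemma} directly. The field $\vec\psi(0)=\textrm{Div}^{-1}[\dot\phi(0)]$ of \eq{divGreens} is by construction the solution of $\vec\nabla\cdot\vec\psi(0)=\dot\phi(0)$ with no divergenceless component, and $\Omega$ is convex by hypothesis, so the hypotheses of Theorem~\ref{blowup_lemma} hold with $f=\dot\phi(0)$. It then yields $\sum_{j=1}^d\|\psi_j(0)\|^2=\|\vec\psi(0)\|^2\le\frac{\ell^2}{\pi^2}\|\dot\phi(0)\|^2$, so the first factor is at most $\|\phi(0)\|^2+\frac{\ell^2}{\pi^2}\|\dot\phi(0)\|^2$ exactly as required.

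The second factor is the delicate one, and I expect it to be the main obstacle. Because $\vec\psi(0)$ has no divergenceless part it is a pure gradient, $\vec\psi(0)=\vec\nabla g$ with $\nabla^2 g=\dot\phi(0)$; commuting $\nabla^2$ through the gradient gives $\nabla^2\psi_j(0)=\partial_j\dot\phi(0)$, so $\sum_{j=1}^d\|\nabla^2\psi_j(0)\|^2=\|\vec\nabla\dot\phi(0)\|^2$ and the factor equals $\|\nabla^2\phi(0)\|^2+\|\vec\nabla\dot\phi(0)\|^2=\|H_{\mathcal{D}}S_0(0)\|^2$. Collapsing this to $\|\nabla^2\phi(0)\|^2$ is immediate for the static initial data $\dot\phi(0)=0$ that governs the standing-wave and spreading examples of \sect{sec:errors}, where $\vec\psi(0)$ vanishes outright and the product of the two factors is exactly $\|\nabla^2\phi(0)\|^2\big(\|\phi(0)\|^2+\frac{\ell^2}{\pi^2}\|\dot\phi(0)\|^2\big)$. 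For general $\dot\phi(0)$ the residual $\|\vec\nabla\dot\phi(0)\|$ term must be absorbed, and this is the step that genuinely requires care: one would again exploit the convexity of $\Omega$ through the same fundamental-gap input used in Theorem~\ref{blowup_lemma}. Substituting both factor bounds and simplifying as noted above then yields \eq{lapbound}.
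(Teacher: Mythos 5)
Your reduction to Theorem~\ref{accumulation} with $\mathcal{D}=\nabla^2$ and your treatment of the first factor via Theorem~\ref{blowup_lemma} (with $f=\dot\phi(0)$, using convexity of $\Omega$) match the paper's proof exactly. The genuine gap is where you say it is: the second factor. But your proposed repair cannot work. The fundamental-gap input bounds a function norm by a derivative norm, $\|\dot{\phi}(0)\|\leq(\ell/\pi)\|\vec{\nabla}\dot{\phi}(0)\|$ in effect, and never the reverse; no inequality on a convex domain controls $\|\vec{\nabla}\dot{\phi}(0)\|$ by $\|\dot{\phi}(0)\|$ or by $\|\nabla^{2}\phi(0)\|$ (fix $\phi(0)$ and take $\dot{\phi}(0)$ of unit norm but increasingly oscillatory: the left side of your residual term grows without bound while every quantity appearing in \eq{lapbound} stays fixed). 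So the step you defer ``to be absorbed'' is not absorbable by the route you sketch, and the proposal as written does not prove the theorem.

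It is worth comparing with how the paper closes this step, because the comparison is instructive. The paper simply asserts that the cross term vanishes: it commutes $\nabla^{2}$ through the integral in \eq{divGreens} and uses that each component of the kernel $(\vec{x}-\vec{y})/|\vec{x}-\vec{y}|^{d}$ is harmonic away from $\vec{y}=\vec{x}$, concluding $\nabla^{2}\vec{\psi}(0)=\vec{0}$ in \eq{zerovec}, whence $\sum_{j}\|\nabla^{2}\psi_{j}(0)\|^{2}=0$ and \eq{lapbound} follows. Your own computation contradicts this, and yours is the distributionally careful one: since the kernel is (up to a constant) $\nabla_{x}$ of the fundamental solution of the Laplacian, its distributional Laplacian is proportional to $\vec{\nabla}\delta^{(d)}(\vec{x}-\vec{y})$ rather than zero, which is exactly your identity $\nabla^{2}\psi_{j}(0)\propto\partial_{j}\dot{\phi}(0)$ obtained via $\vec{\psi}(0)=\vec{\nabla}g$, $\nabla^{2}g\propto\dot{\phi}(0)$. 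The paper's interchange of $\nabla^{2}$ with the singular integral silently drops the delta contribution at the singularity, and the two computations agree only when $\vec{\nabla}\dot{\phi}(0)=0$, e.g.\ for the static data you mention. So your instinct that this step ``genuinely requires care'' is vindicated in a strong sense: you have not proved the theorem, but you have exposed that the paper's own proof closes the gap with a step that fails in general, and that the clean bound \eq{lapbound}, lacking any $\|\vec{\nabla}\dot{\phi}(0)\|$ term, is only justified under additional hypotheses (such as $\dot{\phi}(0)$ constant) or with the corrected second factor $\|\nabla^{2}\phi(0)\|^{2}+c\,\|\vec{\nabla}\dot{\phi}(0)\|^{2}$ that your analysis produces.
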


\begin{proof}
	By theorem \ref{accumulation},
	\begin{equation}
	\label{fromaccum}
	\|\phi_{\epsilon}(t)-\phi_{0}(t)\|\leq\sqrt{2t\epsilon}\left[\left(\|\phi(0)\|^{2}+\sum_{j=0}^{d}\|\psi_{j}(0)\|^{2}\right)\left(\|\nabla^{2}\phi(0)\|^{2}+\sum_{j=0}^{d}\|\nabla^{2}\psi_{j}(0)\|^{2}\right)\right]^{1/4}
	\end{equation}
	By theorem \ref{blowup_bound},
	\begin{equation}
	\label{firstnote}
	\sum_{j=0}^d \| \psi_j(0) \|^2 \leq \frac{\ell^2}{\pi^2} \| \dot{\phi}(0) \|.
	\end{equation}
	Recalling \eq{divGreens}, we have
	\begin{eqnarray}
	\nabla^2 \vec{\psi}(0) & = & 
	\nabla^2 \int d^d y \frac{\vec{x} - \vec{y}}{|\vec{x} - \vec{y}|^d} \dot{\phi}(\vec{y}) \\
	& = &  \int d^d y \left( \nabla^2 \frac{\vec{x} - \vec{y}}{|\vec{x} - \vec{y}|^d} \right) \dot{\phi}(\vec{y}) \\
	& = & \vec{0}. \label{zerovec}
	\end{eqnarray}
	Substituting \eq{firstnote} and \eq{zerovec} into \eq{fromaccum} yields \eq{lapbound}.
\end{proof}

Theorem \ref{accumulation} gives a very nice quantitative upper bound on discretization errors in terms of directly accessible properties of the initial conditions. However, it only applies under the specific condition that the error term of interest is expressible as a negative coefficient times the square of a Hermitian linear combination of partial derivatives. Not all discretized Laplacians satisfy this. However, it is possible to engineer high order Laplacians such that this is the case. We illustrate this by giving an explicit discretized Laplacian in two dimensions with error of order $a^2$, which satisfies this condition. The formula is
\begin{equation}
\begin{array}{l} \frac{1}{a^2} \left\{ -\frac{2}{15} \left[ \phi(x,y+2a) + \phi(x,y-2a) + \phi(x+2a,y) + \phi(x-2a,y) \right] \right. \vspace{5pt} \\
-\frac{1}{10}\left[\phi(x+a,y+a)+\phi(x-a,y+a)+\phi(x+a,y-a)+\phi(x-a,y-a)\right] \vspace{5pt} \\
+\frac{26}{15}\left[\phi(x+a,y)+\phi(x-a,y)\vspace{5pt}+\phi(x,y+a)\left.+\phi(x,y-a)\right]-6\phi(x,y)\right\} \vspace{10pt} \\
= \nabla^2 \phi(x,y) - \frac{a^2}{20} \left( \nabla^2 \right)^2 + O(a^6), \end{array}
\end{equation}
as one can verify by Taylor expansion. Thus ``stencil'' for discretizing a two dimensional Laplacian is illustrated in figure \ref{stencil}. An incidence matrix factorization for this stencil-based Laplacian is given in appendix \ref{stencil_incidence}. 

Theorem \ref{accumulation_lap} has the benefit that the error bound is characterized directly in terms of easily accessible quantities ($\phi$ and $\dot{\phi}$). However, the downside is that the condition on the error term (namely that it should take the form of a negative coefficient times the square of a Laplacian) is somewhat restrictive. Theorem \ref{accumulation} is more general in that the error term could be higher order, but still requires it to be the square of a differential operator. In appendix \ref{alt_smooth} we derive an alternative theorem which relaxes this restriction and can be applied to Laplacians that are constructed directly as a sum of discretized second partial derivatives. Relative to stencil-based discrete Laplacians such as in figure \ref{stencil} these Laplacians are much easier to derive and factor into incidence matrices at any order. On the other hand, we do not know how to use the methods of appendix \ref{alt_smooth} appears to obtain an error bound directly in terms of $\phi$ and $\dot{\phi}$. (In other words, appendix \ref{alt_smooth} contains only an analogue of theorem \ref{accumulation} but no analogue of theorem \ref{accumulation_lap}.) We include both versions of our analysis as we believe it may depend on context which one is more useful. A related question, which we leave for future work, is whether the specialized forms for the discretized Laplacians devised in this section and in appendix \ref{alt_smooth} result in smaller discretization errors than other discretized Laplacians at the same order. It is quite possible that they only aid in yielding provable error bounds but do not actually yield smaller error in practice.

\begin{figure}
	\begin{center}
		\includegraphics[width=0.45\textwidth]{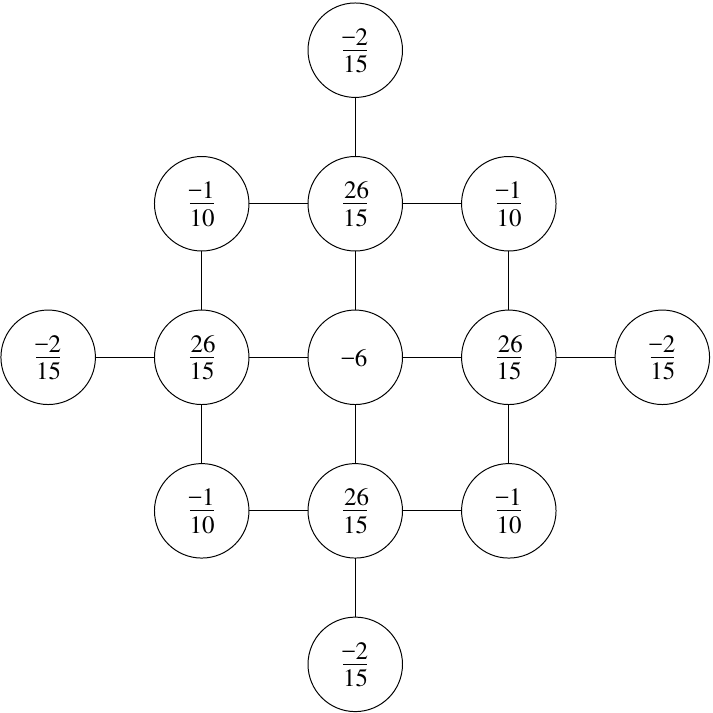}
	\end{center}
	\caption{\label{stencil} This linear combination of values at neighboring lattice sites produces a discrete approximation to the Laplacian with errors of order $a^2$ satisfying the conditions of theorem \ref{accumulation}. Specifically, one obtains $\nabla^2 \phi(x,y) - \frac{a^2}{20} \left( \nabla^2 \right)^2 + O(a^6)$. Thus the operator $\mathcal{D}$ in theorem \ref{accumulation} is in this case $\nabla^2$.}
\end{figure}

\section{Post-Processing}
\label{sec:postprocessing}

After performing Hamiltonian simulation we are left with a state which encodes both $\phi (T)$ and $B^{-1} d \phi (T) /dt$.
Depending on the application, we might be interested in just $\phi$ or just $d \phi /dt$ or both. 

If our goal is to produce a state proportional to $\phi$ then the post-processing amounts to measuring if the state is in $\mathcal{H}_V$ or $\mathcal{H}_E$ (recall the full Hilbert space is $\mathcal{H}_V \oplus \mathcal{H}_E$), with success if it is measured in $\mathcal{H}_V$. In general we cannot give a reasonable lower bound on the success probability of this measurement, even for simple systems. To see this, consider the case of the standing wave in 1D with Dirichlet boundary conditions. The initial conditions are $\phi (x,0)=\cos (x)$ and $d \phi (0) /dt=0$, and at any other time the field can be written $\phi (x,t)= f(t) \cos(x)$ for some $f$ that oscillates between 1 and -1. If the evolution time $T$ is chosen so that $f(T)=0$ then $\phi (x,T)=0$. So the state will have no support (up to errors from the finite difference method) in $\mathcal{H}_V$. However, at least in this example, for average choice of $T$ instead of worst-case, one will have an $O(1)$ probability of obtaining the $\phi$ subspace. The same issue arises if we are instead wish to extract $d \phi/dt$ from the complementary subspace.

If our goal is to produce a state proportional to $d \phi /dt$ then the post-processing is a little more complicated. We begin by measuring if the state is in $\mathcal{H}_V$ or $\mathcal{H}_E$, with success if it is measured in $\mathcal{H}_E$. The resulting state is proportional to $B^{-1} d \phi /dt$, so it remains to cancel $B^{-1}$. This inverse can be canceled in much the same way that $B^{-1}$ was originally applied. Mirroring the procedure for matrix inversion in \cite{harrow2009quantum}, the procedure for matrix multiplication is

\begin{align}
|B^{-1}d\phi/dt\rangle|0\rangle|0\rangle&=\sum_{j}\alpha_{j}|\Lambda_{j}\rangle|0\rangle|0\rangle\\&\mapsto\sum_{j}\alpha_{j}|\Lambda_{j}\rangle|\tilde{\lambda}_{j}\rangle|0\rangle\\&\mapsto\sum_{j}\alpha_{j}|\Lambda_{j}\rangle|\tilde{\lambda}_{j}\rangle\left(\frac{\tilde{\lambda}_{j}}{C}|0\rangle+\frac{\sqrt{C^{2}-\tilde{\lambda}_{j}^{2}}}{C}|1\rangle\right).\nonumber
\end{align}

The first line re-expresses the initial state in the eigenbasis $\{ | \Lambda_j \rangle \}$ of the Hamiltonian which is simulated in the subsequent phase estimation step. 

In the second line we run phase estimation on the unitary $\exp{(-iH)}$, where $H$ is exactly the same Hamiltonian we used for simulating the wave equation, and write the eigenvalues to the second register.
We use $| \Lambda_j \rangle$ to denote the eigenstate with eigenvalue $\lambda_j$, but we use $| \tilde \lambda_j \rangle$ to denote a state encoding the approximation of the eigenvalue output by phase estimation.

In the third line we perform a controlled rotation of the second qubit. The constant $C$ must satisfy $C \geq \sqrt{||L||}$ so that the argument under the square root is not negative. Setting it to $\Theta (\sqrt{||L||})$, the probability of measuring the last qubit in $| 0 \rangle$ is $\kappa (L)^{-2}$ in the worst case (i.e. when the initial state only has support in the ground space of the Hamiltonian.). Then we produce a state proportional to $d \phi (T) /dt$ conditioned on measuring the last qubit in the state $|0 \rangle$.

\section{Comparison to Other Quantum Algorithms}
\label{sec:comparison}

As discussed in the introduction, there are three quantum algorithms to which ours can be meaningfully compared. The algorithm of Clader, Jacobs, and Sprouse solves a problem related to, but not identical with, that solved here. Namely they give a quantum algorithm to compute scattering crossections in the special case of monochromatic illumination \cite{clader2013preconditioned}. In \cite{montanaro2016quantum}, Montanaro and Pallister analyze the degree to which quantum linear system algorithms can achieve speedups for finite element methods. The performance of such quantum algorithms when applied to wave equations is a complex question that we defer to future work.

The most direct comparison to our algorithm can be made with the algorithm of Berry, Childs, Ostrander, and Wang \cite{berry2017quantum}. Since the algorithm of \cite{berry2017quantum} only works for first order differential equations, we must introduce ancillary variables to simulate a second order differential equation. To simulate the wave equation for $\phi (x)$, we introduce the variable $\theta (x) \equiv a \frac{d \phi}{dt}$, in which case we have the first order equation

\begin{align}
\frac{d}{dt} \begin{bmatrix}
\phi \\ \theta
\end{bmatrix}
& = \frac{1}{a}
\begin{bmatrix}
0 & 1 \\
-L & 0 
\end{bmatrix}
\begin{bmatrix}
\phi \\ \theta
\end{bmatrix}
\end{align}

Let
\begin{equation}
\label{adef}
A = \frac{1}{a}
\begin{bmatrix}
0 & 1 \\
-L & 0 
\end{bmatrix}
\end{equation}
and let $V$ be a matrix that diagonalizes $A$:
\begin{equation}
A = V^{-1} D V \quad \quad \textrm{$D$ diagonal}.
\end{equation}
($V$ is defined only up to an overal normalization.) The complexity of the algorithm of \cite{berry2017quantum} is dictated by $\kappa_V$, the condition number of $V$ (which is independent of the normalization of $V$). Specifically, theorem 9 of \cite{berry2017quantum} gives a runtime upper bound for their quantum algorithm of
\begin{equation}
\label{complexity}
\widetilde{O} \left( \kappa_V s g T \| A \| \right),
\end{equation}
where $s$ is the sparsity of $A$, and $g$ is a measure of how much the norm of the solution vector $x(t)$ varies over the duration of the simulation, namely
\begin{equation}
g = \max_{t \in [0,T]} \| \vec{x}(t) \|/\| \vec{x}(T) \|.  
\end{equation}

We can see that for the problem at hand, as the lattice spacing $a$ is taken to zero:
\begin{eqnarray}
\label{scalings}
s & = & O(1) \nonumber \\
T & = & O(1) \nonumber \\
g & = & O(1) \nonumber \\
\|A \| & = & O(a^{-1}).
\end{eqnarray}
We can work out $\kappa_V$ by noting that $A$ is diagonalized by the matrix whose columns are the eigenvectors of $A$. That is, if the eigenvectors of $A$ are $\vec{v}_1,\ldots,\vec{v}_N$ with corresponding eigenvalues $\lambda_1,\ldots,\lambda_N$ then $V^{-1} A V = \textrm{diag}(\lambda_1,\ldots,\lambda_N)$ where
\begin{equation}
V = 
\left[
\begin{array}{cccc}
\vrule & \vrule & & \vrule\\
\vec{v}_{1} & \vec{v}_{2} & \ldots & \vec{v}_N \\
\vrule & \vrule & & \vrule 
\end{array}
\right]
\end{equation}
Let $\vec{y}_1,\ldots,\vec{y}_N$ denote the eigenvectors of $L$. By inspecting \eq{adef} one sees that the eigenvectors of $A$ are
\begin{equation}
\left[ \begin{array}{c} \vec{y}_1 \\ i \sqrt{\lambda_1} \vec{y}_1 \end{array} \right],
\left[ \begin{array}{c} \vec{y}_1 \\ -i \sqrt{\lambda_1} \vec{y}_1 \end{array} \right],
\ldots,
\left[ \begin{array}{c} \vec{y}_M\\ i \sqrt{\lambda_M} \vec{y}_M \end{array} \right],
\left[ \begin{array}{c} \vec{y}_M \\ -i \sqrt{\lambda_M} \vec{y}_M \end{array} \right].
\end{equation}
($M$ is the dimension of $L$ and $N = 2M$ is the dimension of $A$.)

We can thus write $V$ in the following block form.
\begin{equation}
V = \left[ \begin{array}{cc}
Y    & Y \\
i Z & -i Z
\end{array}
\right]
\end{equation}
where
\begin{equation}
Y = 
\left[
\begin{array}{cccc}
\vrule & \vrule & & \vrule\\
\vec{y}_{1} & \vec{y}_{2} & \ldots & \vec{y}_M \\
\vrule & \vrule & & \vrule 
\end{array}
\right]
\end{equation}
and
\begin{equation}
Z = 
\left[
\begin{array}{cccc}
\vrule & \vrule & & \vrule\\
\sqrt{\lambda_1} \vec{y}_{1} & \sqrt{\lambda_2} \vec{y}_{2} & \ldots & \sqrt{\lambda_M} \vec{y}_M \\
\vrule & \vrule & & \vrule 
\end{array}
\right].
\end{equation}
$L$ is a symmetric matrix so $\vec{y}_1,\ldots,\vec{y}_M$ form an orthogonal basis. We choose the normalizations to make it orthonormal. Let $U$ be the orthogonal matrix that diagonlizes $Y$. Then
\begin{equation}
\left[ \begin{array}{cc} U^T & 0 \\ 0 & U^T \end{array} \right]
\left[ V \right]
\left[ \begin{array}{cc} U & 0 \\ 0 & U \end{array} \right] =
\left[ \begin{array}{cc} \id & \id \\ iS & -iS \end{array} \right],
\end{equation}
where
\begin{equation}
S = \left[ \begin{array}{ccc} \sqrt{\lambda_1} & & \\ & \ddots & \\ & & \sqrt{\lambda_M} \end{array} \right].
\end{equation}
Permuting the basis then yields
\begin{equation}
\left[ \begin{array}{cccc} B_1 & & & \\
& B_2 & & \\
& & \ddots & \\
& & & B_M
\end{array} \right]
\end{equation}
where for each $j=1,\ldots,M$ the block $B_j$ is given by the following $2 \times 2$ matrix
\begin{equation}
B_j = \left[ \begin{array}{cc} 1 & 1 \\ i \sqrt{\lambda_j} & -i \sqrt{\lambda_j} \end{array} \right].
\end{equation}
This preceeding manipulations were all changes of basis, which do not affect the eigenspectrum of. Thus, the eigenvalues of $V$ are the eigenvalues of $B_1,\ldots,B_M$. By direct calculation, the eigenvalues of $B_j$ are $q^{(+)}_j$ and $q^{(-)}_j$ where
\begin{equation}
q^{(\pm)}_j = \frac{1}{2} \left( 1 + i \sqrt{\lambda_j} \pm \sqrt{1-6 i \sqrt{\lambda_j} - \lambda_j} \right).
\end{equation}

For a path graph of $N$ vertices the eigenvalues of the Laplacian range from $\sim 1/N^2$ to $1$, and the same is true for any larger constant number of dimensions for the eigenvalues of an $N \times N \times \ldots \times N$ grid. The smallest eigenvalue of $V$ is thus $q_i^{(-)}$ with where $i$ indexes the smallest eigenvalue of $L$. Thus, for large $N$, we can approximate $q_i^{(-)}$ by Taylor expanding to lowest order in $\sqrt{\lambda_i}$, obtaining
\begin{eqnarray}
q_i^{(-)} & = & \frac{1}{2} \left( 1 + i \sqrt{\lambda_j} - \sqrt{1-6 i \sqrt{\lambda_j} - \lambda_j} \right) \\
& \simeq & 2 i \sqrt{\lambda_j},
\end{eqnarray}
which is of order $a$. Similarly, we can see that the largest eigenvalue of $V$ is $O(1)$ and thus 
\begin{equation}
\label{kappav}
\kappa_V = \Theta(a^{-1}).
\end{equation}
Substituting \eq{kappav} and \eq{scalings} into \eq{complexity} yields a total complexity of $O(a^{-2})$ for the quantum algorithm of \cite{berry2017quantum}.

In the algorithm presented here, we have quadratically better dependence on $\kappa$. There are three places for this dependence to come into the total complexity of our algorithm. First, if we choose to prepare an arbitrary initial state, then the first step of our algorithm is to implement, via quantum linear algebra methods \cite{harrow2009quantum, childs2015quantum,ambainis} the Moore-Penrose pseudoinverse of the incidence matrix $B$. The complexity of this step is proportional to the condition number of $B$, which is the square root of the condition number of the Laplacian $L$ \footnote{The condition numbers of $B$ and $L$ will both depend on the connectivity of the lattice. These condition numbers can be large if scatterers are present which create bottlenecks in the lattice, i.e., convex locations where only a few edges can be removed that will partition the lattice into two relatively large components.}. A second place that the condition number can contribute to the complexity is in the post-processing, as we saw when we considered producing a state proportional to $d \phi /dt$. Here our approach also scales quadratically better with respect to the condition number of the Laplacian. Additionally, the number of qubits required by our algorithm is $\log N$ where $N$ is the number of lattice sites, whereas the number of qubits required by the algorithm of \cite{berry2017quantum} is $O(\log(N)+ \log t)$, where $t$ is the duration of the process to be simulated.

It is worthwhile to relate the Laplacian's condition number, which is a fairly abstract quantity, with parameters of more direct physical significance. In the case of a Laplacian for a $D$-dimensional cubic volume of dimension $\ell \times \ell \times \ldots$ discretized into a cubic lattice of spacing $a$ one sees that the largest eigenvalue of $-\frac{1}{a^2} L$ is of order $D/a^2$ and the smallest eigenvalue is of order $1/\ell^2$. Thus the condition number of the Laplacian is of order $D\ell^2/a^2$, so the incidence matrix has a condition number of order $\sqrt{D} \ell / a$. In our algorithm, the simulation of the time-evolution itself, achieved using \cite{berry2015hamiltonian}, scales as $\widetilde{O}(stD/a)$. Thus, both state preparation and time-evolution have complexity scaling linearly in $a^{-1}$.

\section{Klein-Gordon Equation}
\label{sec:KG}

Going to relativistic theories we know that spinless particles are described by the Klein-Gordon equation,
\begin{equation}
\label{Klein_Gordon}
\frac{1}{c^{2}}\frac{\partial^{2}\phi}{\partial t^{2}}-\nabla^{2}\phi+\frac{m^{2}c^{2}}{\hbar^{2}}\phi=0,
\end{equation}
where $m$ is the particle mass, $c$ is the speed of light and $\hbar$ the Planck constant. In order to not carry these constants any more we will adopt the natural units, which implies $c=1$ and $\hbar=1$. 

As we can see we are dealing with a wave equation, and thus it also should admit some Hamiltonian in our Schr\"{o}dinger equation. Suppose we have a graph $G'$, where 
\begin{equation*}
\frac{\partial^{2}\phi}{\partial t^{2}}=\frac{1}{a^{2}}L\left(G'\right)\phi,
\end{equation*}
is the discretized version of Eq.(\ref{Klein_Gordon}). It means that our Laplacian has the whole information about the particle, which includes its mass term. In fact this graph $G'$ can be  easily achieved from a graph $G$ that gives our ordinary wave equation, which means $L\left(G\right)$ does not have a mass term.

Starting with $G$ the mass term can be realized by adding self loops with {$W=\left(am\right)^2$} as its weight on all vertices of $G$. This manipulated graph is our graph $G'$. Finally, as we did before, we need to construct its incidence matrix $B\left(G'\right)$ in order to get the Laplacian, 
\[
B\left(G'\right)^{\dagger}B\left(G'\right)=L\left(G'\right).
\]
Besides, without difficult we can see how this Laplacian is related with the Laplacian from $G$
\[
L\left(G'\right)=L\left(G\right)+{a^2}m^{2}I,
\]
where $I$ is the identity matrix. Therefore, whereas  $B\left(G\right)$ gives our ordinary wave equation, applying $B\left(G'\right)$  in our Hamiltonian gives our relativistic wave equation.

\section{Maxwell's Equations}
\label{sec:Maxwell}

With $\mu_0=\epsilon_0=1$ and without sources, Maxwell's equations governing the time evolution of electric and magnetic fields take the form

\[
\frac{\partial \vec E}{\partial t} =  \vec \nabla \times \vec B  \ \ \
\frac{\partial \vec B}{\partial t} =  - \vec \nabla \times \vec E 
\]

\noindent which imply that $\vec E$ and $\vec B$ both follow the wave equation. If we consider discretizing space, then we can write these as

\[
\frac{\partial }{\partial t}
\begin{bmatrix}
\vec E \\
\vec B
\end{bmatrix}
= 
\begin{bmatrix}
0 & C \\
- C & 0 
\end{bmatrix}
\begin{bmatrix}
\vec E \\
\vec B
\end{bmatrix}
\]

\noindent where $C$ is the finite difference approximation of the curl operator. To see how to construct $C$, consider the following
\[
\vec \nabla \times  \begin{bmatrix} a \\ b \\ c\end{bmatrix}  =  \begin{bmatrix}
\partial c / \partial y - \partial b / \partial z \\ 
\partial a / \partial z - \partial c / \partial x \\ 
\partial b / \partial x - \partial a / \partial y \\ 
\end{bmatrix} 
=  \begin{bmatrix}
0 & - \partial / \partial z & \partial / \partial y \\
\partial / \partial z & 0 & - \partial / \partial x \\
-\partial / \partial y & \partial / \partial x & 0
\end{bmatrix}
\begin{bmatrix}
a\\ b \\ c
\end{bmatrix}.
\]

This suggests we should consider the linear differential equation
\begin{equation}
\frac{\partial }{\partial t} \begin{bmatrix}
E_x \\ E_y \\ E_z \\ B_x \\ B_y \\ B_z
\end{bmatrix}= \begin{bmatrix}
0 & 0 & 0 & 0 & - \partial / \partial z & \partial / \partial y \\
0 & 0 & 0 & \partial / \partial z & 0 & - \partial / \partial x \\
0 & 0 & 0 & -\partial / \partial y & \partial / \partial x & 0 \\
0 &  \partial / \partial z & -\partial / \partial y & 0 & 0 & 0 \\
-\partial / \partial z & 0 &  \partial / \partial x & 0 & 0 & 0 \\
\partial / \partial y & -\partial / \partial x & 0 & 0 & 0 & 0
\end{bmatrix}
\begin{bmatrix}
E_x \\ E_y \\ E_z \\ B_x \\ B_y \\ B_z
\end{bmatrix}
\label{Maxwell}
\end{equation}
We can discretize space into a uniform cubic lattice and approximate the differential operators using finite difference methods to reduce this to an ordinary differential equation. (Appendix \ref{ordertables} contains numerical values for the entries of these operators up to tenth order.)
This ordinary differential equation will be a case of Schr\"{o}dinger's equation since the approximate differential operators coming from the Lagrange interpolation formula are anti-Hermitian. In this case, unitarity translates to conservation of the classical energy contained in the field $\displaystyle\int_V | \vec E (\vec x )|^2 + | \vec B (\vec x )|^2$.

\section{Future Work}

It is an interesting open question whether our quantum algorithm is optimal. In particular, it is natural to ask whether an analogue of the no-fast-forwarding theorem from \cite{BACS07} could yield a lower bound for the complexity of the problem of simulating wave equations that matches the complexity of the algorithm presented here. It is also interesting to investigate the performance of quantum algorithms for simulating the wave equation based on finite element methods, rather than finite difference methods, as considered here. Another direction for future work is to use automated circuit synthesis techniques to generate concrete quantum circuits implementing our algorithm and thereby obtain quantitative resource estimates for benchmark instances of wave equation simulation problems. Lastly, one can consider extending the quantum algorithm presented here to more complicated wave equations.


\section{Acknowledgements}

The authors thank Yi-Kai Liu and Eite Tiesinga for insightful discussions. The authors also thank David Gosset, Gorjan Alagic, Peter Bierhorst, and anonymous referees for useful feedback on the manuscript. This research was supported by the Department of Energy under award number DE-SC0016431. Parts of this research were completed while SJ was an employee of the National Institute of Standards and Technology, an agency of the US government. The resulting portions of this manuscript are not subject to US copyright.

\appendix

\section{Alternative Smoothness Analysis}
\label{alt_smooth}

\begin{theorem}
	\label{alt_accumulation}
	Let $\phi_{\lambda}$ be the solution to
	\begin{equation}
	\label{alt_lambdawave}
	\frac{\partial^2 \phi_\lambda}{\partial t^2} = \nabla^2 \phi_{\lambda} + \lambda^2 \sum_{j=1}^d \left( \frac{ \partial^k}{\partial x_j^k} \right)^2 \phi_{\lambda}
	\end{equation}
	on some compact continuous domain $\Omega \subset \mathbb{R}^d$ subject to some specified boundary conditions. We take initial conditions at $t=0$ to be fixed functions $\phi(\vec{x},0)$ and $\dot{\phi}(\vec{x},0)$ independent of $\lambda$. Then for any $\epsilon \in \mathbb{R}$ and any $t \geq 0$
	\begin{equation*}
	\|\phi_{\epsilon}(t)-\phi_{0}(t)\|\leq\sqrt{2t\epsilon}\left[\left(\|\phi(0)\|^{2}+\sum_{j=1}^{d}\|\psi_{j}(0)\|^{2}\right)\left(\sum_{j=1}^{d}\left(\left\Vert \frac{\partial}{\partial x_{j}^{k}}\phi(0)\right\Vert ^{2}+\sum_{l=1}^{d}\left\Vert \frac{\partial^{k}}{\partial x_{j}^{k}}\psi_{l}(0)\right\Vert ^{2}\right)\right)\right]^{1/4}.
	\end{equation*}
	where $\| f \| \equiv \sqrt{\int_\Omega d^d x |f(\vec{x})|^2}$ and
	\begin{equation}
	\vec{\psi}(\vec{x}, 0) = \int d^d y \frac{\vec{x}-\vec{y}}{|\vec{x} - \vec{y}|^d} \dot{\phi}(\vec{y},0).
	\end{equation}
\end{theorem}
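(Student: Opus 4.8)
The plan is to follow the proof of Theorem~\ref{accumulation} essentially verbatim, the only genuinely new ingredient being the construction of the auxiliary Hamiltonian and the conservation argument that controls the error. As there, I would introduce a ``fake Schr\"odinger equation'' $\frac{dS_\lambda}{dt}=-iH_\lambda S_\lambda$ with $H_\lambda=H_0+\lambda H_{\mathcal{D}}$, engineered so that the top component $\phi_\lambda$ of $S_\lambda$ obeys \eq{alt_lambdawave}. The obstruction to reusing Theorem~\ref{accumulation} directly is that the error term $\lambda^2\sum_j\mathcal{D}_j^2$, with $\mathcal{D}_j\equiv\partial^k/\partial x_j^k$, is a \emph{sum} of squares rather than a single square $\mathcal{D}^2$; the block-diagonal $H_{\mathcal{D}}$ used there cannot produce a sum of squares in its $(0,0)$ entry.

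To get around this I would enlarge the auxiliary space. In addition to the $d$ ``gradient'' fields $\psi_1,\dots,\psi_d$ that $H_0$ couples to $\phi$ through the $\partial_{x_j}$ (exactly as in Theorem~\ref{accumulation}), I would introduce a second, disjoint block of $d$ fields $\chi_1,\dots,\chi_d$ and let $H_{\mathcal{D}}$ couple $\phi\leftrightarrow\chi_j$ through $\mathcal{D}_j$ alone, leaving the $\psi$ block untouched. Because $H_0$ and $H_{\mathcal{D}}$ then act on disjoint auxiliary sectors, every term in $H_\lambda^2$ linear in $\lambda$ vanishes on the $(0,0)$ block, and a short computation gives $(-H_\lambda^2)_{00}=\nabla^2-\lambda^2\sum_j\mathcal{D}_j^\dag\mathcal{D}_j=\nabla^2+\lambda^2\sum_j\mathcal{D}_j^2$ (using $\mathcal{D}_j^\dag=-\mathcal{D}_j$ for odd $k$, the case in which the error is stabilizing), with $\phi$ decoupled from the auxiliary fields at second order, so $\phi_\lambda$ solves \eq{alt_lambdawave}. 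I would take initial data $\chi_j(0)=0$ and $\vec\psi(0)=\mathrm{Div}^{-1}[\dot\phi(0)]$ as in \eq{divGreens}, which makes $S_\epsilon(0)=S_0(0)$ and gives $\|S_0(0)\|^2=\|\phi(0)\|^2+\sum_j\|\psi_j(0)\|^2$, the first factor in the stated bound.

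With this setup the inner-product estimate is identical to Theorem~\ref{accumulation}: Hermiticity of $H_\lambda$ gives $\frac{d}{dt}\langle S_\epsilon,S_0\rangle=i\epsilon\langle S_\epsilon,H_{\mathcal{D}}S_0\rangle$, and Cauchy--Schwarz with $\|S_\epsilon(t)\|=\|S_\epsilon(0)\|$ yields $\left|\frac{d}{dt}\langle S_\epsilon,S_0\rangle\right|\leq\epsilon\,\|S_0(0)\|\,\|H_{\mathcal{D}}S_0(t)\|$. The step I expect to be the crux is producing a \emph{conserved} majorant for $\|H_{\mathcal{D}}S_0(t)\|$, since here $H_{\mathcal{D}}$ is off-diagonal and, unlike the $H_{\mathcal{D}}^{+}$ of Theorem~\ref{accumulation}, does not commute with $H_0$. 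I would resolve this with two observations: (i) along the $\lambda=0$ trajectory the $\chi$ fields are decoupled under $H_0$ and start at zero, so the $\chi$-components of $S_0(t)$ vanish identically, which collapses $\|H_{\mathcal{D}}S_0(t)\|^2$ to exactly $\sum_j\|\mathcal{D}_j\phi_0(t)\|^2$; and (ii) defining $\tilde K_j=\mathcal{D}_j\otimes\id$ acting on \emph{every} component, the quantity $\sum_j\|\tilde K_j S_0(t)\|^2$ is conserved, because each $\tilde K_j$ commutes with $H_0$ (for the boundary conditions under which these constant-coefficient operators are (anti-)Hermitian) and $e^{-iH_0t}$ is unitary. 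Since $\sum_j\|\mathcal{D}_j\phi_0(t)\|^2\leq\sum_j\|\tilde K_j S_0(t)\|^2$ (the majorant merely adds the nonnegative $\psi$-component terms $\sum_{j,l}\|\mathcal{D}_j\psi_l(t)\|^2$) and the conserved value at $t=0$ is exactly $\sum_j(\|\mathcal{D}_j\phi(0)\|^2+\sum_l\|\mathcal{D}_j\psi_l(0)\|^2)$, I would obtain $\|H_{\mathcal{D}}S_0(t)\|\leq[\text{second factor}]^{1/2}$. Substituting into the inner-product estimate, using $\frac{d}{dt}\|S_\epsilon-S_0\|^2=-2\,\mathrm{Re}\,\frac{d}{dt}\langle S_\epsilon,S_0\rangle$, integrating over $[0,t]$ with $S_\epsilon(0)=S_0(0)$, and dropping the nonnegative $\|\vec\psi_\epsilon-\vec\psi_0\|^2$ term to pass from $\|S_\epsilon-S_0\|$ to $\|\phi_\epsilon-\phi_0\|$, then reproduces the stated inequality exactly as in Theorem~\ref{accumulation}.
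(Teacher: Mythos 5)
Your proposal is correct and follows essentially the same route as the paper's proof: the paper's $S_\lambda=[\phi_\lambda,\vec\psi_\lambda,\vec\theta_\lambda]$ with $H_1$ coupling $\phi$ to the extra block $\vec\theta$ (your $\vec\chi$), the same initial data $\vec\theta(0)=0$, $\vec\psi(0)=\mathrm{Div}^{-1}[\dot\phi(0)]$, the same Cauchy--Schwarz estimate on $\frac{d}{dt}\langle S_\epsilon,S_0\rangle$, and the same conserved majorant built from the constant-coefficient operators $\mathcal{H}_j^{(k)}$ commuting with $H_0$. Your two small departures are harmless or mildly cleaner --- you note directly that the $\chi$-components of $S_0(t)$ vanish identically (where the paper reaches the same conclusion via a separate conserved operator $\mathcal{H}_\theta$), and you make explicit the odd-$k$ anti-Hermiticity $\mathcal{D}_j^\dag=-\mathcal{D}_j$ needed for $H_\lambda$ to be Hermitian with the stated sign of the error term, a point the paper leaves implicit in its $i^k$ normalization.
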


\begin{proof}
	Let
	\begin{eqnarray}
	\label{alt_slambdadef}
	S_\lambda & = & \left[ \begin{array}{c} \phi_\lambda \\ \vec{\psi}_\lambda \\ \vec{\theta}_{\lambda} \end{array} \right] \label{alt_slambda}\\
	\nabla & = & \left[ \frac{\partial}{\partial x_1}, \ldots, \frac{\partial}{\partial x_d} \right] \\
	\nabla_k & = & \left[ \frac{\partial^k}{\partial x_1^k}, \ldots, \frac{\partial^k}{\partial x_d^k} \right] \\
	H_0 & = & \left[ \begin{array}{ccc}
	0 & \nabla & 0 \\
	-\nabla^T  & 0 & 0 \\
	0 & 0 & 0
	\end{array} \right]\\
	H_1 & = & \left[ \begin{array}{ccc}
	0 & 0 & \nabla_k \\
	0 & 0 & 0 \\
	-\nabla_k^T & 0 & 0 \end{array} \right] \\
	H_\lambda &  = & H_0 + \lambda H_1 \\
	\frac{dS_{\lambda}}{dt} & = & -i H_\lambda S_\lambda. \label{alt_fake_Schrod}
	\end{eqnarray}
	By \eq{alt_fake_Schrod},
	\begin{eqnarray}
	\frac{d^2}{dt^2} S_\lambda & = & - H_\lambda^2 S_\lambda \\
	& = & \left[ \begin{array}{ccc}
	\nabla^2 + \lambda^2 \nabla_k^2 & 0 & 0 \\
	0 & \nabla^T \nabla & \lambda \nabla^T \nabla_k \\
	0 & \lambda \nabla^T \nabla_k & \lambda^2 \nabla_k^T \nabla_k
	\end{array} \right] \left[ \begin{array}{c} \phi \\ \vec{\psi} \\ \vec{\theta} \end{array} \right]. 
	\end{eqnarray}
	Thus the solution to \eq{alt_fake_Schrod} satisfies \eq{alt_lambdawave}. As initial conditions $(t=0)$ for $\vec{\psi}_\lambda$ we take
	\begin{eqnarray}
	\vec{\psi}(0) & = & \textrm{Div}^{-1} \left[ \dot{\phi}(0) \right] \label{init_psi} \\
	\vec{\theta}(0) & = & 0 \label{alt_init_theta}
	\end{eqnarray}
	where $\textrm{Div}^{-1}$ is as defined in \eq{divGreens}. 
	By \eq{alt_fake_Schrod} we have
	\begin{eqnarray}
	\frac{d}{dt}\langle S_{\epsilon},S_{0}\rangle&=&\left\langle \dot{S}_{\epsilon}(t),S_{0}(t)\right\rangle +\left\langle S_{\epsilon}(t),\dot{S}_{0}(t)\right\rangle \\
	&=&\langle-i(H_{0}+\epsilon H_{1})S_{\epsilon}(t),S_{0}(t)\rangle+\langle S_{\epsilon}(t),-iH_{0}S_{0}(t)\rangle\\
	&=&\langle S_{\epsilon}(t),i(H_{0}+\epsilon H_{1})S_{0}(t)\rangle\langle S_{\epsilon}(t),-iH_{0}S_{0}(t)\rangle\\
	&=&i\epsilon\langle S_{\epsilon}(t),H_{1}S_{0}(t)\rangle.
	\end{eqnarray}
	Thus, by the Cauchy-Schwarz inequality
	\begin{equation}
	\label{alt_unsimp}
	\left| \frac{d}{dt} \langle S_\epsilon, S_0 \rangle \right| \leq \epsilon \| S_\epsilon(t) \| \times \| H_1 S_0(t) \|,
	\end{equation}
	where $\| S \|$ is a shorthand for $\sqrt{ \langle S, S \rangle }$. $H_\lambda$ is Hermitian for real $\lambda$ and therefore $\| S_\epsilon(t) \| = \| S_\epsilon(0) \|$. Thus \eq{alt_unsimp} simplifies to
	\begin{equation}
	\label{alt_chbound1}
	\left| \frac{d}{dt} \langle S_\epsilon, S_0 \rangle \right| \leq \epsilon \| S_\epsilon(0) \|  \times \| H_1 S_0(t) \|.
	\end{equation}
	Next, observe that
	\begin{equation}
	\label{alt_observation}
	\| H_1 S_0(t) \| = \sqrt{ \left\| \sum_{j=1}^d \frac{\partial^k}{\partial x_j^k} \theta_j \right\|^2 + \sum_{j=1}^d \left\| \frac{\partial^k}{\partial x_j^k} \phi \right\|^2}.
	\end{equation}
	and
	\begin{eqnarray}
	\left\| \frac{\partial^k}{\partial x_j^k} \phi \right\|^2 & \leq & \left\| \frac{\partial^k}{\partial x_j^k} \phi \right\|^2 + \sum_{l=1}^d \left\| \frac{\partial^k}{\partial x_j^k} \psi_l \right\|^2 \\
	& = & \left\| \mathcal{H}_j^{(k)} S_0(t) \right\|^2
	\end{eqnarray}
	where
	\begin{equation}
	\label{alt_hjkdef}
	\mathcal{H}_j^{(k)} = i^k \left[ \begin{array}{c|ccc|ccc}
	\frac{\partial^k}{\partial x_j^k} & & & & & & \\
	\hline
	&  \frac{\partial^k}{\partial x_j^k} & & & & & \\
	& & \ddots & & & & \\
	& & & \frac{\partial^k}{\partial x_j^k} & & & \\
	\hline
	& & & & 0 & & \\
	& & & & & \ddots & \\
	& & & & & & 0 \end{array} \right].
	\end{equation}
	For any $k$, $\mathcal{H}_j^{(k)}$ commutes with the $H_0$ and is Hermitian. Thus
	\begin{equation}
	\left\| \mathcal{H}_j^{(k)} S_0(t) \right\| = \left\| \mathcal{H}_j^{(k)} S_0(0) \right\|.
	\end{equation}
	Next, we observe that
	\begin{eqnarray}
	\left\| \sum_{j=1}^d \frac{\partial^k}{\partial x_j^k} \theta_j \right\| & \leq & \sum_{j=1}^d \left\| \frac{\partial^k}{\partial x_j^k} \theta_j \right\| \\
	& \leq & \sqrt{d} \sqrt{ \sum_{j=1}^d \left\| \frac{\partial^k}{\partial x_j^k} \theta_j \right\|^2} \\
	& = & \sqrt{d} \| \mathcal{H}_\theta S_0(t) \|
	\end{eqnarray}
	where
	\begin{equation}
	\mathcal{H}_\theta = \left[ \begin{array}{c|ccc|ccc}
	0 & 0 & \ldots & 0 & 0 & \ldots & 0 \\
	\hline
	0 & 0 & \ldots & 0 & 0 & \ldots & 0 \\
	\vdots & \vdots & & \vdots & \vdots & & \vdots \\
	0 & 0 & \ldots & 0 & 0 & \ldots & 0 \\
	\hline
	0 & 0 & \ldots & 0 & \frac{\partial^k}{\partial x_1^k} &  &  \\
	\vdots & \vdots & & \vdots & & \ddots & \\
	0 & 0 & \ldots & 0 & & & \frac{\partial}{\partial x_d^k}
	\end{array} \right].
	\end{equation}
	$\mathcal{H}_\theta$ is Hermitian and commutes with $H_0$ thus, by \eq{alt_init_theta},
	\begin{equation}
	\| \mathcal{H}_\theta S_0(t) \| = \| \mathcal{H}_\theta S_0(0) \| = 0.
	\end{equation}
	Substituting these results into \eq{alt_observation} yields
	\begin{equation}
	\label{alt_H1bound}
	\| H_1 S_0(t) \| \leq \sqrt{\sum_{j=1}^d \left\| \mathcal{H}_j^{(k)} S_0(0) \right\|^2}.
	\end{equation}
	Substituting \eq{alt_H1bound} into \eq{alt_chbound1} yields
	\begin{equation}
	\label{alt_changebound}
	\left| \frac{d}{dt} \langle S_\epsilon, S_0 \rangle \right| \leq \epsilon \| S_\epsilon(0) \|  \times \sqrt{\sum_{j=1}^d \left\| \mathcal{H}_j^{(k)} S_0(0) \right\|^2}.
	\end{equation}
	By definition
	\begin{eqnarray}
	\| S_\epsilon(t) - S_0(t) \|^2 &=& \langle S_\epsilon(t) - S_0(t), S_\epsilon(t) - S_0(t) \rangle \nonumber\\
	&=&\langle S_\epsilon(t), S_\epsilon(t) \rangle + \langle S_0(t), S_0(t) \rangle \nonumber \\
	&-&2 \mathrm{Re} \langle S_\epsilon, S_0 \rangle.
	\end{eqnarray}
	The ``Hamiltonians'' $H_0$ and $H_1$ are Hermitian so $\langle S_\epsilon(t), S_\epsilon(t) \rangle$ and  $\langle S_0(t), S_0(t) \rangle$ are time-independent for any $\epsilon \in \mathbb{R}$. Thus,
	\begin{equation}
	\frac{d}{dt} \| S_\epsilon(t) - S_0(t)\|^2 =  -2 \mathrm{Re} \frac{d}{dt} \langle S_\epsilon, S_0 \rangle.
	\end{equation}
	Thus, by \eq{alt_changebound}
	\begin{equation}
	\label{alt_changebound2}
	\left| \frac{d}{dt} \| S_\epsilon(t) - S_0(t)\|^2 \right| \leq 2 \epsilon \| S_\epsilon(0) \|  \times \sqrt{\sum_{j=1}^d \left\| \mathcal{H}_j^{(k)} S_0(0) \right\|^2}.
	\end{equation}
	By the triangle inequality
	\begin{eqnarray}
	\| S_\epsilon(t) - S_0(t) \|^2 & = & \| S_\epsilon(0) - S_0(0) \|^2\\ 
	&+& \int_0^t d\tau \frac{d}{d\tau} \| S_\epsilon(\tau) - S_0(\tau) \|^2 \nonumber\\
	& \leq &  \| S_\epsilon(0) - S_0(0) \|^2 \nonumber\\
	&+& \int_0^t d\tau \left| \frac{d}{d\tau} \| S_\epsilon(\tau) - S_0(\tau) \|^2 \right|.\nonumber
	\end{eqnarray}
	The initial conditions have $S_\epsilon(0) = S_0(0)$, and therefore
	\begin{equation}
	\label{alt_Sdiff}
	\| S_\epsilon(t) - S_0(t) \|^2 \leq \int_0^t d\tau \left| \frac{d}{d\tau} \| S_\epsilon(\tau) - S_0(\tau) \|^2 \right|.
	\end{equation}
	Applying \eq{alt_changebound2} to \eq{alt_Sdiff} yields
	\begin{equation}
	\| S_\epsilon(t) - S_0(t) \|^2 \leq 2 t \epsilon \| S_\epsilon(0) \|  \times \sqrt{\sum_{j=1}^d \left\| \mathcal{H}_j^{(k)} S_0(0) \right\|^2}.
	\end{equation}
	Recalling the definition of $S_\lambda$ \eq{alt_slambda},
	\begin{equation}
	\| S_\epsilon(t) - S_0(t) \|^2 = \| \phi_\epsilon(t) - \phi_0(t) \|^2 + \| \vec{\psi}_\epsilon(t) - \vec{\psi}_0(t) \|^2. 
	\end{equation}
	Thus \eq{alt_Sdiff} implies the bound
	\begin{equation}
	\label{alt_almostdone}
	\| \phi_\epsilon(t) - \phi_0(t) \|^2 \leq 2 t \epsilon \| S_\epsilon(0) \|  \times \sqrt{\sum_{j=1}^d \left\| \mathcal{H}_j^{(k)} S_0(0) \right\|^2}.
	\end{equation}
	By \eq{alt_hjkdef}, \eq{alt_slambdadef}, and \eq{alt_init_theta}, \eq{alt_almostdone} becomes
	\begin{equation}
	\|\phi_{\epsilon}(t)-\phi_{0}(t)\|^{2}\leq2t\epsilon\sqrt{\left(\|\phi(0)\|^{2}+\sum_{j=1}^{d}\|\psi_{j}(0)\|^{2}\right)}\sqrt{\left(\sum_{j=1}^{d}\left(\left\Vert \frac{\partial}{\partial x_{j}^{k}}\phi(0)\right\Vert ^{2}+\sum_{l=1}^{d}\left\Vert \frac{\partial^{k}}{\partial x_{j}^{k}}\psi_{l}(0)\right\Vert ^{2}\right)\right).}\nonumber
	\end{equation}
\end{proof}
Theorem \ref{alt_accumulation} gives a very nice quantitative upper bound on discretization errors in terms of directly accessible properties of the initial conditions. Furthermore, theorem \ref{blowup_lemma} shows that the quantity $\vec{\psi}(0)$ has magnitude not too much larger than the chosen initial velocity $\dot{\phi}(0)$. However, theorem \ref{alt_accumulation} applies only under the specific condition that the error term of interest is expressible as a positive coefficient times the sum of $(2k)\th$ derivatives. Not all discretized Laplacians satisfy this. However, it is possible to engineer high order Laplacians such that this is the case. This problems reduces to engineering a high order discretized one-dimensional derivatives such that the leading error term is a positive coefficient times an even derivative. The Laplacian in $d$ dimensions can then be composed as the sum of these discretized derivatives along each of the coordinate axes. 

We illustrate this by giving an explicit discretized Laplacian in one dimension with error of order $a^4$, which satisfies this condition and then computing a corresponding incidence matrix factorization. By Taylor expansion, one can verify that
\begin{eqnarray}
&  & -\frac{9}{2} f(x) + \frac{17}{6} \left( f(x+a) + f(x-a) \right)  \nonumber\\
& & - \frac{41}{60} \left( f(x+2a) + f(x-2a) \right) + \frac{1}{10} \left( f(x+3a) + f(x-3a) \right) \nonumber\\
& = & a^2 \frac{d^2 f}{dx^2}(x) + \frac{4}{45} a^6 \frac{d^6 f}{dx^6}(x) + O(a^8).
\end{eqnarray}
On a one dimensional lattice with periodic boundary conditions we can write this Laplacian as
\[
L^{(4)} = a_0 \id + a_1 (S + S^-1) + a_2 (S^2 + S^{-2})+ a_3 (S^3 + S^{-3})
\]
where $S$ is the cyclic shift operator and
\begin{eqnarray*}
	a_0 & = & -9/2 \\
	a_1 & = & 17/6 \\
	a_2 & = & -41/60 \\
	a_3 & = & 1/10.
\end{eqnarray*}
Next, we verify that this can be factorized as
\begin{equation}
\label{alt_factorization}
L^{(4)} = -B^T B
\end{equation}
with sparse $B$. To this end we introduce the ansatz
\begin{equation}
B = b_0 \id + b_1 S + b_2 S^2 + b_3 S^3.
\end{equation}
The requirement \eq{alt_factorization} then determines a system of quadratic equations constraining $b_0,b_1,b_2,b_3$. One solution to this system of equations is (up to 6 digits of precision)
\begin{eqnarray*}
	b_0 & = & \phantom{-}1.27811 \\
	b_1 & = & -1.63446 \\
	b_2 & = & \phantom{-}0.434589 \\
	b_3 & = & -0.0782406 
\end{eqnarray*}
as one can verify.

\section{Analytical Q} \label{Q_statics}

We begin by giving the mesh spacing as a function of the number of vertices $\left|V\right|=n$ for our one dimensional lattice
\begin{equation}
a\left(n\right)=\frac{1}{n+1}.
\end{equation}

As discussed in \sect{sec:errors}, in order to get $Q$ we need to work with the three different mesh spacing $a_{1}$, $a_{2}$ and $a_{3}$, where the relation between them can be established  working with the follow total number of vertices
\begin{eqnarray}
a_{1}\left(4n+3\right)&=&\frac{1}{4\left(n+1\right)},\\a_{2}\left(2n+1\right)&=&\frac{1}{2\left(n+1\right)},\nonumber\\a_{3}\left(n\right)&=&\frac{1}{n+1}\nonumber,
\end{eqnarray}
respectively. Moving forward we get three discrete functions that describe the standing wave,
\begin{eqnarray}
\phi_{j}^{a}&=&\cos\left(\omega^{a}t\right)\sin\left(\frac{\pi}{4\left(n+1\right)}j\right),\\\phi_{j}^{2a}&=&\cos\left(\omega^{2a}t\right)\sin\left(\frac{\pi}{2\left(n+1\right)}j\right),\nonumber\\\phi_{j}^{4a}&=&\cos\left(\omega^{4a}t\right)\sin\left(\frac{\pi}{n+1}j\right)\nonumber
\end{eqnarray}
where $\omega$ is the frequency of the wave,

\begin{eqnarray}
\omega^{a}&=&8\left(n+1\right)\sin\left(\frac{\pi}{8\left(n+1\right)}\right),\\\omega^{2a}&=&4\left(n+1\right)\sin\left(\frac{\pi}{4\left(n+1\right)}\right),\nonumber\\\omega^{4a}&=&2\left(n+1\right)\sin\left(\frac{\pi}{2\left(n+1\right)}\right)\nonumber.
\end{eqnarray}
From the $Q$ factor definition we know that we need to compute two differences $\varPhi^{4a}-\varPhi^{2a}$ and $\varPhi^{2a}-\varPhi^{a}$. However, these points should be computed at the same distance, which means $\varPhi_{j}^{4a}-\varPhi_{2j}^{2a}$, and $\varPhi_{2j}^{2a}-\varPhi_{4j}^{a}$. Let us proceed with the follow computation,
\begin{equation*}
\varPhi_{j}^{4a}-\varPhi_{2j}^{2a}=\left(\cos\left(\omega^{4a}t\right)-\cos\left(\omega^{2a}t\right)\right)\sin\left(\frac{\pi}{n+1}j\right).
\end{equation*}
But we are interest in the continuum limit of this expression, with means $a\rightarrow0$ or $n\rightarrow\infty$. Thus, from now the idea is to work with approximate values. Starting with the frequency,
\begin{eqnarray*}
	\omega^{4a}&\simeq&\pi-\delta_{4a},\\\omega^{2a}&\simeq&\pi-\delta_{2a}.
\end{eqnarray*}
where
\[
\delta_{4a}=-\frac{\pi^{3}}{24\left(n+1\right)^{2}},
\]
and
\[
\delta_{2a}=-\frac{\pi^{3}}{96\left(n+1\right)^{2}}.
\]
Now we can use the following trigonometric property,
\[
\cos\left[\left(\pi-\delta_{4a}\right)t\right]-\cos\left[\left(\pi-\delta_{2a}\right)t\right]=-2\sin\left(\bar{\omega}t\right)\sin\left(\delta t\right),
\]
with 
\begin{eqnarray*}
	\bar{\omega}&=&\pi-\frac{\delta_{4a}-\delta_{2a}}{2},\\\delta&=&\frac{\delta_{4a}-\delta_{2a}}{2}.
\end{eqnarray*}
But for large $n$ we get the follow approximations
\begin{eqnarray*}
	\sin\left(\bar{\omega}t\right)&\simeq&\sin\left(\pi t\right),\\\sin\left(\delta t\right)&\simeq&-\frac{3\pi^{3}}{192n^{2}}.
\end{eqnarray*}
However, our real interest is computing the norm $\left\Vert \varPhi^{4a}-\varPhi^{2a}\right\Vert _{2}$ in the continuum limit,
{\footnotesize
	\begin{eqnarray*}
		\left\Vert \varPhi^{4a}-\varPhi^{2a}\right\Vert _{2}&=&\lim_{n\rightarrow\infty}\sqrt{\frac{1}{n}\sum_{j=1}^{n}\left(\varPhi_{j}^{4a}-\varPhi_{2j}^{2a}\right)^{2}},\\&=&\lim_{n\rightarrow\infty}\sqrt{\frac{1}{n}\sum_{j=1}^{n}4\sin^{2}\left(\pi t\right)\left(\frac{3\pi^{3}}{192n^{2}}\right)^{2}\sin\left(\frac{\pi}{n+1}j\right)}.
\end{eqnarray*}}
where we can make use of the expression below,
\begin{equation*}
\lim_{n\rightarrow\infty}\frac{1}{n}\sum_{j=0}^{n}\sin^{2}\left(\frac{\pi j}{n}\right)=\int_{0}^{1}dx\sin^{2}\left(\pi x\right)\,=\,\frac{1}{2},
\end{equation*}
Therefore,
\begin{equation*}
\left\Vert \varPhi^{4a}-\varPhi^{2a}\right\Vert _{2}=\sqrt{2}\sin\left(\pi t\right)\left(\frac{3\pi^{3}}{192n^{2}}\right).
\end{equation*}
Similarly, for $\left\Vert \varPhi^{2a}-\varPhi^{a}\right\Vert _{2}$ we get
\begin{eqnarray*}
	\left\Vert \varPhi^{2a}-\varPhi^{a}\right\Vert _{2}&=&\lim_{n\rightarrow\infty}\sqrt{\frac{1}{n}\sum_{j=1}^{n}\left(\varPhi_{2j}^{2a}-\varPhi_{4j}^{a}\right)^{2}},\\&=&\frac{1}{4}\sqrt{2}\sin\left(\pi t\right)\left(\frac{3\pi^{3}}{192n^{2}}\right).
\end{eqnarray*}
Thus, combining these two results in the Q factor expression we establish
\begin{equation*}
\underset{a\rightarrow0}{Q\left(t\right)}=4,
\end{equation*}
that agrees with the value for $e_{2}$ in the Richardson expansion and with our numerical result.

The same steps can be done for the second order Laplacian to see $Q\left(t\right)=16$ in the continuum limit. However the correct wave frequency for this case is 
\[
\omega=\left(n+1\right)\sqrt{\frac{5}{2}-\frac{8}{3}\cos\left(\frac{\pi}{n+1}\right)+\frac{1}{6}\cos\left(\frac{2\pi}{n+1}\right)}.
\]

\section{Numerical Values for Higher Order Operators}
\label{ordertables}

In this appendix we provide tables of numerical values for the entries of higher order approximations of derivative operators, specifically the first derivative and the Laplacian. We also include a table of values for factorizing higher order Laplacians, and we discuss how to deal with factorizing stencil based Laplacians in more than one dimension. We use $k\th$ order to indicate that at lattice spacing $a$, the leading error term in the discrete derivative is of order $a^k$.

\subsection{First Derivative}

Below is a table of numerical values $a_j$ used for higher order approximations of the first-order derivative. For a 1D space with periodic boundary conditions, the radius-$N$ approximation is $ \sum_{j=-N}^{N} a_j S^j$ where $S$ represents a cyclic permutation of the vertices, i.e., $S_{i,j} = \delta_{i,j+1 \mod M}$ for $M>2N+1$.\\
\\
{\footnotesize
	\begin{tabular}{|l|l|c|}
		\hline
		operator & $\partial / \partial x$ \\
		\hline
		radius $N$ & order $k$ & entries $a_{-N}$ to $a_N$ \\
		\hline
		1 & 2 & -1/2, 0 ,1/2 \\
		2 & 4 & 1/12, -2/3, 0, 2/3, -1/12 \\
		3 & 6 & -1/60, 3/20, -3/4, 0, 3/4, -3/20, 1/60 \\
		4 & 8 & 1/280, -4/105, 1/5, -4/5, 0, 4/5, -1/5, 4/105, -1/280 \\
		5 & 10 & -1/1260, 5/504, -5/84, 5/21, -5/6, 0, 5/6, -5/21, 5/84, -5/504, 1/1260 \\
		\hline
\end{tabular}}\\
\\

\subsection{1-D Laplacians}

If we take the second derivative of the Lagrange interpolation formula (truncated at the $N$-th order), we arrive at Eqn. \ref{lagrangelaplacian}. Using this expression, we can find the coefficients $a_j$ which let us write the Laplacian under periodic boundary conditions as $L= \sum_{j=-N}^{N} a_j S^j$. Since the Laplacian is symmetric $a_j=a_{-j}$. In the table below we give the values for $a_j$ for the first 5 orders of truncation.
\\
\begin{tabular}{|l|l|c|}
	\hline
	operator & $\partial^2 / \partial x^2$ \\
	\hline
	radius $N$ & order $k$ & $a_{0}$ to $a_N$\\
	\hline
	1 & 2 & -2,1 \\
	2 & 4 & -5/2,4/3,-1/12  \\
	3 & 6 & -49/18,3/2,-3/20,1/90\\
	4 & 8 & -205/72,8/5,-1/5,8/315,-1/560\\
	5 & 10 & -5269/1800,5/3,-5/21,5/126,-5/1008,1/3150  \\
	\hline
\end{tabular}\\
\\
In order to implement our algorithm using any of the above Laplacians, we need to know its incidence matrix factorization. A simple procedure for doing this is the following:

\begin{enumerate}
	\item Generate the coefficients of the Laplacian operator using the Lagrange interpolation formula.
	\item With these coefficients, write the Laplacian for a 1-D grid with periodic boundary conditions in the form $ \sum_{j=-N}^{N} a_j S^j$. Note $a_j=a_{-j}$ since Laplacians are symmetric.
	\item Build an ansatz for the incidence matrix of the form $B = \sum_{j=1}^{N}b_j (I-S^j) $.
	\item Calculate $ BB^{\dagger}$.
	\item Solve $BB^{\dagger}= \sum_{j=-N}^{N} a_j S^j$ for the values $b_j$.
\end{enumerate}

We choose the ansatz $B = \sum_{j=1}^{N}b_j (I-S^j) $ instead of one like $\sum_{j=1}^{N}c_j S^j$ so that $BB^{\dagger}$ automatically has zero sum rows and columns like a Laplacian under periodic boundary conditions. The table below gives values for $b_j$ which lead to various higher order Laplacians.

\begin{tabular}{|l|c|}
	\hline
	radius $N$ & $b_{1}$ to $b_N$ \\
	\hline
	1  & 1\\
	2  &  1.1547, - (0.5774 $\pm$ 0.5)  \\
	3  & 1.2192, -0.1247, 0.0101 \\
	&  0.1247, -1.2192, 1.1046 \\
	4 & -0.0465, 1.1508, -1.2284, 0.1076 \\
	& 1.2540, -0.1552, 0.0209, -0.0016 \\
	& 0.0209, -0.1552, 1.2540, -1.1181 \\
	& 1.2284, -1.1508, 0.0465, -0.0166 \\
	5  & -0.0041, 0.0306, -0.1762, 1.2756, -1.1262 \\
	&  1.2756, -0.1762, 0.0306,-0.0041,0.0003 \\
	& 0.0289, 1.0626, -1.3223, 0.2195, -0.0131\\
	& 0.2195, -1.3223, 1.0626, 0.02891, 0.0243 \\
	\hline
\end{tabular}

\subsection{2-D Laplacians}
\label{stencil_incidence}

If we restrict to decomposing Laplacians into the form $L_{\mathrm{tot}}=L_x+L_y$ (treating the total Laplacian operator as the sum of the Laplacians in the $x$ and $y$ directions) then we can factor them simply by concatenating incidence matrices, as described in Subsection \ref{2dandbeyond}. These Laplacians are a restricted case since they approximate the second derivative at vertex $(x,y)$ using only the values of the function at vertices in the set $\{(x,y+r) | r \in \{-k, -k+1 \dots k-1,k \} \} \cup \{(x+r,y) | r \in \{-k, -k+1 \dots k-1,k \} \}$ (i.e. using vertices lying on a $+$-sign shaped subset of the vertices at distance $\leq r$ from $(x,y)$). 

Another well-known way to approximate Laplacians in multiple dimensions is to use \emph{stencils} such as the one in Figure \ref{stencil}. These have the disadvantage that their incidence matrices are not simply the concatenation of incidence matrices for Laplacians in the $x$ and $y$ directions; however, our procedure for calculating incidence matrix factorizations in this case can generalize. Using stencils has the advantage that they approximate the Laplacian at $(x,y)$ using all points within some distance $r$ of $(x,y)$ and not just those within distance $r$ in the $x$ of $y$ direction. 

We show how to factor the Laplacian corresponding to the stencil in Fig. \ref{stencil} which has error of order $a^2$. The formula is
\begin{equation}
\begin{array}{l} \frac{1}{a^{2}}\left\{ -\frac{2}{15}\left[\phi(x,y+2a)+\phi(x,y-2a)+\phi(x+2a,y)\right.\right.\vspace{5pt}  \\
\left.+\phi(x-2a,y)\right]-\frac{1}{10}\left[\phi(x+a,y+a)+\phi(x-a,y+a)\right.
\vspace{5pt} \\
\left.+\phi(x+a,y-a)+\phi(x-a,y-a)\right]+\frac{26}{15}\left[\phi(x+a,y)\right. \vspace{10pt} \\
\left.\left.+\phi(x-a,y)+\phi(x,y+a)+\phi(x,y-a)\right]-6\phi(x,y)\right\}  \vspace{10pt} \\
= \nabla^2 \phi(x,y) - \frac{a^2}{20} \left( \nabla^2 \right)^2 + O(a^6), \end{array}
\end{equation}
as one can verify by Taylor expansion. 
Previously we assumed we worked in a large one dimensional space with periodic boundary conditions; in this case we assume we're working on a large 2D space with periodic boundaries which can be treated as a torus discretized using a square grid. The Laplacian matrix can then be expressed as 
\begin{eqnarray}
L &=& -6I + \frac{26}{15} (S \otimes I + S^{\dagger} \otimes I + I \otimes S+ I \otimes S^{\dagger})\\
&-&\frac{1}{10} (S \otimes S + S \otimes S^{\dagger} + S^{\dagger} \otimes S + S^{\dagger} \otimes S^{\dagger})\nonumber\\
&-&\frac{2}{15} (S^2 \otimes I + (S^{\dagger})^2 \otimes I + I \otimes S^2+ I \otimes (S^{\dagger})^2)\nonumber
\end{eqnarray}

Our ansatz for the incidence matrix is

\[ B = \left[ \sum_{j,k}^{|j|+|k| \leq N}b_{j,k} (I-S^j \otimes S^k)  \text{ \Huge $|$ }   \sum_{j=-N}^{N}c_{j} (I-S^j \otimes I)   \right] \]

where $[A|B]$ denotes the horizontal concatenation of matrices $A$ and $B$. By construction this ansatz has zero-sum rows.

In terms of hypergraphs, this incidence matrix has hyperedges connecting vertices at distance at most $2N$ from each other, so the stencil they produce will have diameter at most $4N$. In fact there are two types of hyperedges present. Those encoded in the left block of the incidence matrix (the part where the coefficients $b_{j,k}$ appear) are hyperedges which span all $N$ neighbors of their center vertices; those encoded in the right block span all $N$ neighbors of their center vertex which have the same $y$ coordinate.

The stencil in Figure \ref{stencil} has diameter 4, and to factor it it suffices to set $N=1$. Doing so we find 16 solutions for the coefficients $b_{j,k}$ and $c_j$, one of which is

\begin{align}
b_{0,1} & = \frac{1}{46} \left(\frac{1}{5} \left(-\sqrt{345}-15\right)+3\right) \nonumber \\
b_{1,0} &= \frac{1}{30} \left(-\sqrt{345}-15\right) \nonumber \\
b_{-1,0} &= \frac{1}{30} \left(-\sqrt{345}-15\right)+1 \nonumber \\
b_{0,-1}&= \frac{1}{46} \left(\frac{1}{5} \left(-\sqrt{345}-15\right)+3\right) \nonumber \\
c_1 &= \frac{1}{138} \left(-2 \sqrt{1794}-69\right) \nonumber \\
c_{-1} &= \frac{1}{138} \left(-2 \sqrt{1794}-69\right)+1 
\end{align}

One might expect to find solutions with $c_j=0$ for all $j$; however, they don't exist. This reveals the importance of choosing the right ansatz for an incidence matrix factorization. For example, when factoring a 3D Laplacian built from a stencil with diameter $4N$, one might try the ansatz
\[ B' = \left[ \sum_{j,k,l}^{|j|+|k|+|l| \leq N}b_{j,k,l} (I-S^j \otimes S^k \otimes S^l)  \text{ \Huge $|$ }   \sum_{j=-N}^{N}c_{j} (I-S^j \otimes I \otimes I)   \right] \]

\noindent and not find solutions, while the ansatz
\begin{eqnarray*}
	B''&=&\left[\sum_{j,k,l}^{|j|+|k|+|l|\leq N}b_{j,k,l}(I-S^{j}\otimes S^{k}\otimes S^{l})\right.\\&\text{ \Huge\ensuremath{|} }&\left.\sum_{j=-N}^{N}c_{j}(I-S^{j}\otimes I\otimes I)\text{ \Huge\ensuremath{|} }\sum_{j=-N}^{N}d_{j}(I-I\otimes S^{j}\otimes I)\right]
\end{eqnarray*}

\noindent might have solutions.


\begin{thebibliography}{10}
	
	\bibitem{berry2014high}
	Dominic~W Berry.
	\newblock High-order quantum algorithm for solving linear differential
	equations.
	\newblock {\em Journal of Physics A: Mathematical and Theoretical},
	47(10):105301, 2014.
	
	\bibitem{berry2017quantum}
	Dominic~W. Berry, Andrew~M. Childs, Aaron Ostrander, and Guoming Wang.
	\newblock Quantum algorithm for linear differential equations with
	exponentially improved dependence on precision.
	\newblock {\em Communications in Mathematical Physics}, 356(3):1057--1081, Dec
	2017.
	
	\bibitem{leyton2008quantum}
	Sarah~K Leyton and Tobias~J Osborne.
	\newblock A quantum algorithm to solve nonlinear differential equations.
	\newblock {\em arXiv preprint arXiv:0812.4423}, 2008.
	
	\bibitem{cao2013quantum}
	Yudong Cao, Anargyros Papageorgiou, Iasonas Petras, Joseph Traub, and Sabre
	Kais.
	\newblock Quantum algorithm and circuit design solving the poisson equation.
	\newblock {\em New Journal of Physics}, 15(1):013021, 2013.
	
	\bibitem{clader2013preconditioned}
	B.~D. Clader, B.~C. Jacobs, and C.~R. Sprouse.
	\newblock Preconditioned quantum linear system algorithm.
	\newblock {\em Phys. Rev. Lett.}, 110:250504, Jun 2013.
	
	\bibitem{montanaro2016quantum}
	Ashley Montanaro and Sam Pallister.
	\newblock Quantum algorithms and the finite element method.
	\newblock {\em Phys. Rev. A}, 93:032324, Mar 2016.
	
	\bibitem{harrow2009quantum}
	Aram~W. Harrow, Avinatan Hassidim, and Seth Lloyd.
	\newblock Quantum algorithm for linear systems of equations.
	\newblock {\em Phys. Rev. Lett.}, 103:150502, Oct 2009.
	
	\bibitem{childs2015quantum}
	Andrew~M. Childs, Robin Kothari, and Rolando~D. Somma.
	\newblock Quantum algorithm for systems of linear equations with exponentially
	improved dependence on precision.
	\newblock {\em SIAM Journal on Computing}, 46(6):1920–1950, Jan 2017.
	
	\bibitem{berry2015hamiltonian}
	D.~W. Berry, A.~M. Childs, and R.~Kothari.
	\newblock Hamiltonian simulation with nearly optimal dependence on all
	parameters.
	\newblock In {\em 2015 IEEE 56th Annual Symposium on Foundations of Computer
		Science}, pages 792--809, Oct 2015.
	
	\bibitem{Grover2000}
	Lov~K. Grover.
	\newblock Synthesis of quantum superpositions by quantum computation.
	\newblock {\em Phys. Rev. Lett.}, 85:1334--1337, Aug 2000.
	
	\bibitem{Zalka}
	Christof Zalka.
	\newblock Efficient simulation of quantum systems by quantum computers.
	\newblock {\em Fortschritte der Physik}, 46:877--879, 1998.
	
	\bibitem{Grover_Rudolph}
	Lov Grover and Terry Rudolph.
	\newblock Creating superpositions that correspond to efficiently integrable
	probability distributions.
	\newblock {\em arXiv preprint quant-ph/0208112}, 2002.
	
	\bibitem{DP80}
	J.R. Dormand and P.J. Prince.
	\newblock A family of embedded runge-kutta formulae.
	\newblock {\em Journal of Computational and Applied Mathematics}, 6(1):19 --
	26, 1980.
	
	\bibitem{STAB}
	Lax P.~D. and Ichtmyer R.~D.~R.
	\newblock Survey of the stability of linear finite difference equations.
	\newblock {\em Communications on Pure and Applied Mathematics}, 9:267--293, May
	1956.
	
	\bibitem{colbert1992novel}
	Colbert Daniel~T. and Miller William~H.
	\newblock A novel discrete variable representation for quantum mechanical
	reactive scattering via the s-matrix kohn method.
	\newblock {\em The Journal of chemical physics}, 96:1982--1991, 1992.
	
	\bibitem{mexico}
	Choptuik Matthew~W.
	\newblock {\em Lectures for VII Mexican School on Gravitation and Mathematical
		Physics; Relativistic and Numerical Relativity; Numerical Analysis for
		Numerical Relativists}.
	\newblock University of British Columbia, 2009.
	
	\bibitem{PW60}
	Payne L.~E. and Weinberger H.~F.
	
	\bibitem{AC11}
	Andrews Ben and Clutterbuck Julie.
	\newblock Proof of the fundamental gap conjecture.
	\newblock {\em Journal of the Americal Mathematical Society}, 24:899--916, Mar
	2011.
	
	\bibitem{ambainis}
	Ambainis Andris.
	\newblock Variable time amplitude amplification and faster quantum algorithm
	for solving systems of linear equations.
	\newblock {\em arXiv:1010.4458 [quant-ph]}, 2012.
	
	\bibitem{BACS07}
	Dominic~W. Berry, Graeme Ahokas, Richard Cleve, and Barry~C. Sanders.
	\newblock Efficient quantum algorithms for simulating sparse hamiltonians.
	\newblock {\em Communications in Mathematical Physics}, 270(2):359--371, Mar
	2007.
	
\end{thebibliography}
\end{document}